\theoremstyle{remark}
\newcommand{\oplabel}[1]{\refstepcounter{equation}(\theequation\ltx@label{#1})}
\newtheorem{theorem}{Theorem}
\newtheorem{lemma}{Lemma}
\newtheorem{corollary}{Corollary}
\newtheorem{definition}{Definition}
\newtheorem{proposition}{Proposition}
\newcommand{\RNum}[1]{\uppercase\expandafter{\romannumeral #1\relax}}
\newcommand{\tb}[1]{{\textbf{#1}}}	
\begin{document}
	
	%
	\title{Excess Distortion Exponent Analysis for Semantic-Aware MIMO Communication Systems}
	%
	%
	%
	
	\author{
		Yuxuan Shi\thanks{Yuxuan Shi and Shuo Shao are with the School of Cyber and Engineering, Shanghai Jiao Tong University, Shanghai 200240, China (e-mail: ge49fuy@sjtu.edu.cn; shuoshao@sjtu.edu.cn).}, Shuo Shao, \IEEEmembership{Member,~IEEE}, Yongpeng Wu, \IEEEmembership{Senior Member,~IEEE}, Wenjun Zhang, \IEEEmembership{Fellow,~IEEE},\thanks{Yongpeng Wu and Wenjun Zhang are with the Department of Electronic Engineering, Shanghai Jiao Tong University, Shanghai 200240, China (e-mail: {yongpeng.wu, zhangwenjun}@sjtu.edu.cn).} \\Xiang-Gen Xia, \IEEEmembership{Fellow,~IEEE},\thanks{Xiang-Gen Xia is with the Department of Electrical, and Computer Engineering, University of Delaware, Newark, DE 19716, USA (e-mail: xianggen@udel.edu).} Chengshan Xiao, \IEEEmembership{Fellow,~IEEE} \thanks{Chengshan Xiao is with the Department of Electrical, and Computer Engineering, Lehigh University, Bethlehem, PA 18015, USA (e-mail: xiaoc@lehigh.edu).}}
	
	%
	%

\markboth{}%
{}
%



\maketitle
\begin{spacing}{1.5}
	\vspace{-1cm}
	\begin{abstract}
		In this paper, the analysis of excess distortion exponent for joint source-channel coding (JSCC) in semantic-aware communication systems is presented. By introducing an unobservable semantic source, we extend the classical results by Csiszar to semantic-aware communication systems. Both upper and lower bounds of the exponent for the discrete memoryless source-channel pair are established. Moreover, an extended achievable bound of the excess distortion exponent for MIMO systems is derived. Further analysis explores how the block fading and numbers of antennas influence the exponent of semantic-aware MIMO systems. Our results offer some theoretical bounds of error decay performance and can be used to guide future semantic communications with joint source-channel coding scheme.
	\end{abstract}
	
	\begin{IEEEkeywords}
		Semantic-aware communication, Excess distortion exponent, Joint source-channel coding, MIMO block fading channel
	\end{IEEEkeywords}

	%
	\IEEEpeerreviewmaketitle

	\section{Introduction}

	\IEEEPARstart{A}{s} a new paradigm in 6G networks, semantic communication gains significant attention in recent days, and is expected to become a promising technology in future wireless communications. According to the definition of semantic information from Weaver and Shannon \cite{Burks_Shannon_Weaver_1951}, this new paradigm, considering the meanings behind symbols instead of pursuing the accurate reconstructions, is able to transmit the desired semantic information to specific receivers. Consequently, compared with the conventional paradigm, semantic-aware communication systems can thus compress the source information in a larger extent, and reduce the corresponding communication cost, such as transmitting power and spectrum resources in wireless systems. Furthermore, for the future potential scenarios (e.g., smart Cities, IoT, virtual reality, etc.) whose main purposes are to enable the receiver know the intrinsic meanings and complete the specific tasks, the studies on semantic communication will be inevitably in full flourish. 
	
	\subsection{Related Works}
	The concept of semantic communication was given by the landmark work \cite{Burks_Shannon_Weaver_1951} in 1950s, in which the author conceived the communication over semantic level. Hereafter the efforts on how to model the semantic information in practical communication have been made in the last seven decades \cite{Carnap1953An,2011Towards,2011Universal}. Specifically, Carnap \cite{Carnap1953An} proposed the  logical probability measure for contexts instead of the statistical probability measure in Shannon's classic theory, Bao \cite{2011Towards} stressed that the background information plays a key role in the semantic communication, and Juba \cite{2011Universal} utilized the feedback/sensing as the intermediate to capture the essence of a message in a goal-oriented communication system. More recently, Liu, Zhang and Poor \cite{Liu_Zhang_Poor_2021} proposed a rate-distortion framework to characterize the semantic information, which models the source as intrinsic and extrinsic states, and solve the optimization problem in some special cases. Liu et. al. in \cite{Fangfang2022} extended the rate-distortion function according to the information bottleneck theory, and realize the semantic-aware image compression. The authors in \cite{JinhoChoi2022} connected a semantic communication layer (SC) on top of the technique communication layer (TC), and proposed different measures on entropy to enhance the knowledge base.
	
	Besides the aforementioned theoretical works, lots of papers focus on the practical realization of semantic communication with the help of artificial intelligence (AI). Numbers of frameworks on semantic communication were proposed to improve the compression or transmission performances, based on the machine learning techniques in terms of the texts, audios and images (see e.g., \cite{Nariman2018,Xie_Qin_Li_Juang_2020,Zhenzi2021,Kountouris_Pappas_2021,Huang2021,Dommel_Utkovski_Simeone_Stanczak_2021, Jiang_Wen_Jin_Li_2021} for a few representative works). Among these, joint source-channel coding (JSCC) based on deep learning (DL) networks is widely applied to improve the semantic communication performance. More specifically, authors in \cite{Xie_Qin_Li_Juang_2020} proposed a general DL-based JSCC framework for semantic communication systems, which is named as DeepSC. Based on the result in \cite{Xie_Qin_Li_Juang_2020}, authors in \cite{Zhenzi2021} presented a similar JSCC framework for a speech transmission and recognition. The authors in \cite{Dommel_Utkovski_Simeone_Stanczak_2021} and \cite{Jiang_Wen_Jin_Li_2021} extended the DeepSC framework in more practical scenarios, which combined the DL-based semantic communication with IoT fog networks and hybrid auto repeat quires (HARQ), respectively. 
	\subsection{Motivations and Contributions}
	Undoubtedly, semantic-aware communication provides a new paradigm of intelligent information exchanges in nowadays wireless communication networks. Nevertheless, existing theoretical works pay more attention to the compression but usually involve (or even not) simple channel models, which cannot offer meaningful guides for the implementations of JSCC-based semantic communication in practical 6G networks. Sparked by the above issue, it is natural to investigate the performance of JSCC-based semantic communication under practical wireless channels, e.g. multiple-input multiple-output (MIMO) channels with fadings, which shows fundamental limits for practical semantic communications. As a revolutionary technique in nowadays wireless networks, MIMO techniques benefit from the space multiplexing and obtain higher channel capacity. Numbers of researches focus on MIMO communication theory, such as capacities analysis \cite{winters1987capacity,telatar1999capacity}, channel diversity analysis \cite{dighe2003analysis,alamouti1998simple,Shin_Win_2008} and block coding regimes \cite{foschini1996layered,tarokh1998space}. Moreover, to verify the superiority of a JSCC scheme, error exponent is chosen as the performance measure, since separated source- channel coding (SSCC) performs the same as JSCC, in error probability sense with infinite block length, while JSCC is strictly optimal in error exponent sense. Roughly speaking, error exponent is the number $E$ with property that the error probability of a suitable code is $e^{-En}$ with block length $n$. Therefore, the error exponent can be used to measure the JSCC-based semantic communication performance. The explorations on error exponents of channel and source with fidelity criterion were given by Gallager \cite{Gallager_1968} and Marton \cite{Marton_1974}, respectively. Furthermore, Csiszar \cite{Csiszar_1981, Csiszar_1982} derived the error exponent of JSCC scheme, and presented it in a divergence form. Zhong \cite{Zhong_Alajaji_Campbell_2007,Zhong_Alajaji_Campbell_2009,Yangfan_Zhong_Alajaji_Campbell_2006} and Chang \cite{Chang_2009} extended the conclusion to systems with continuous alphabet and side information, respectively. The analysis on Gallager's random coding bound of MIMO channel exponent was stated in \cite{Shin_Win_2008,Alfano_Chiasserini_Nordio_Zhou_2015}.
	
	Inspired by the framework in \cite{Liu_Zhang_Poor_2021}, this paper considers a point-to-point semantic-aware communication system under JSCC framework. Specifically, following the rate-distortion function on characterizing the semantic information, we first start from a long Markov chain which consists of a source pair $(S,X)$, a noisy channel $W$ and the reconstructions $(\hat{S},\hat{X})$, in which $S$ represents the semantic source (intrinsic state) and $X$ stands for the observed source (extrinsic state). It is a generalized semantic communication model, which is named as semantic-aware communications, owing to the two necessary distortion constraints on semantic and observed reconstructions. This is the main difference between the remote source coding problem and our source system model. Next we emphasize that this model is highly consistent with most of the AI-based semantic communication works. Among these, some works transmit the extracted semantics and hope to recover the original texts/images/videos
	at the receiver \cite{Yang2022,Kountouris_Pappas_2021,Tian2022,Xie_Qin_Li_Juang_2020}, which means they consider the observed recovery $\hat{X}$ in their loss functions. Some other works execute the feature-specified tasks \cite{Nariman2018,Zhenzi2021,Huang2021, Jiang_Wen_Jin_Li_2021}, e.g., the object detection and image recognition, which means the semantic recovery $\hat{S}$ is considered. Then in the second part of this paper, we further generalize the model to a MIMO case, and obtain an achievable JSCC error exponent for a semantic-aware MIMO system. This extension enables the application of error exponent-optimal JSCC scheme in 6G wireless networks. Finally we conclude the main technical problems in the theoretical analysis: it is hard to characterize the joint typical sets of source sequences when we incorporate an extra semantic source. This obstacle is solved by introducing a channel coding theorem from \cite{Csiszar_1981} to show the joint typicality among the semantic, observed and received sequences. Moreover, to obtain the optimal exponent in MIMO systems, the random matrices instead of random scalars are operated, e.g., the integration of random channel state matrix, which is difficult to calculate. Hence, the hypergeometry function is utilized in the statement for further computation.
	
	Under this model, we first investigate the exponential rate of the excess distortion probability that either the recovered semantic or observed sequences exceed their required distortions (thus we use the notation ``excess distortion exponent'' instead of ``error exponent'' in the following). Upper and lower bounds of the exponent are presented as optimization problems in a discrete and memoryless case. We verify that our results can be degenerated to the Csiszar's JSCC exponent \cite{Csiszar_1982} or Weissman and Merhav's noisy source coding exponent \cite{Weissman_Merhav_2002,Weissman_2004}, and a direct conclusion is obtained that semantic-aware communication enlarges the error exponent in comparison with the conventional paradigm. Further, under a Gaussian source combined with a MIMO block fading channel, an achievable excess distortion exponent of JSCC schemes is given. In this case, the influences from coherence time, correlation coefficient and antennas numbers can be explicitly discussed. From the achievability bound, a list coding scheme can be designed by combining the list size with the semantic entropy. Moreover, the bound can extend some existing works on JSCC scheme for wireless communications, e.g., spatial coupled LDPC or D-polar codes \cite{Qiufang2022,Kai2022} to the semantic-aware scenarios. Besides, solution of the optimization problem of JSCC exponent for semantic-aware MIMO systems is offered. Finally, numerical results on the exponent are also presented to show how the environment parameters affect the exponent.
	
	This paper is organized as follows: in Section \ref{Sec2}, we give the notations on semantic-aware communication system, joint source-channel coding scheme and the excess distortion exponent. In Section \ref{Sec3}, upper and lower bounds on JSCC excess distortion exponent are presented in the discrete and memoryless case, as well as the degenerated cases to Csiszar, Weissman and Merhav's exponents. In Section \ref{Sec4}, a theory on achievable parametric form in a MIMO communication system and its optimization problem is presented. In Section \ref{Sec5}, we provide some examples and plots to illustrate the exponential behaviors of JSCC exponent, and discuss the influences of the key quantities.
	
	\section{Problem Formulation}	\label{Sec2}
	In this section, we present the model of the semantic-aware communication system, including the definitions of semantic-aware JSCC scheme, the excess distortion event and the excess distortion exponent.
	
	Throughout the paper, an upper case letter stands for a random variable, whose realization is represented by a lower case letter, and its alphabet is a calligraphy letter. For example, $x$ taking values in $\mathcal{X}$ is the realization of random variable $X$. $|\mathcal{X}|$ is the cardinality of $\mathcal{X}$, and $(x)^+$ denotes $\max(x,0)$. The distribution $P_X$ is the probability mass function (pmf) of $X$ if it has a countable alphabet. Besides, sequences are labeled with its length as superscript, such as $X^n = (X_{1},X_{2},\cdots,X_{n})$ and its realization $x^n$ follows similarly. $\mathbb{E}_{P(x)}[X]$ represents the expectation of random variable $X$ according to distribution $P(x)$, and $I_{P(x,y)}(X,Y)$ denotes the mutual information between $X$ and $Y$ in terms of joint distribution $P(x,y)$. $\mathcal{C}(\mathcal{A}\rightarrow\mathcal{B})$ denotes the set of all conditional distributions $P(b|a)$ where $a\in
	\mathcal{A}$ and $b\in\mathcal{B}$. Moreover, vectors and matrices are represented by bold letter, and $\tb{I}_m$ is the $m\times m$ identity matrix. Superscript $H$ and operator $\mathrm{tr}(\cdot)$ denote the transpose conjugate and trace function, respectively. Finally, $\tb{X}\in\mathbb{C}^{m\times n}\sim\mathcal{MN}(\tb{M},\tb{U},\tb{V})$ means that $\tb{X}$ follows matrix normal distribution with probability density function 
	\[
	p_{\tb{X}}(\tb{X})=\pi^{-m n} \operatorname{det}(\tb{U})^{-n} \operatorname{det}(\tb{V})^{-m} \exp\left\{\mathrm{tr}\left(-\tb{U}^{-1}(\tb{X}-\tb{M}) \tb{V}^{-1}(\tb{X}-\tb{M})^{H}\right)\right\},
	\]
	where $\tb{M} \in \mathbb{C}^{m \times n}, 0<\tb{U}=\tb{U}^{H} \in \mathbb{C}^{m \times m}$, $0<\tb{V}=\tb{V}^{H} \in \mathbb{C}^{n \times n}$, and $\tb{A}>0$ means that matrix $\tb{A}$ is positive definite.
	\subsection{Problem Formulation}
	\begin{figure}[h]
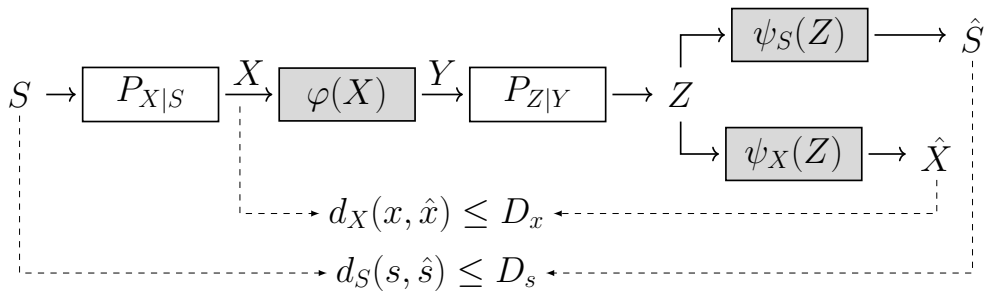

		\centering
		\includestandalone[width=0.8\textwidth]{figure/ChannelModel}%
		\caption{A semantic-aware communication system}
		\label{Sys}
	\end{figure}
	A semantic-aware communication system is depicted in Fig. \ref{Sys}. A discrete memoryless source (DMS) is described as a pair of random variables $(S,X)$ with joint distribution $P_{S,X}$ in product alphabet $\mathcal{S}\times\mathcal{X}$. In this model, $S$ is considered as the invisible intrinsic state with semantic information while $X$ is the extrinsic state and appears as the observable information. Moreover, a memoryless channel $W$ is defined with input $Y\in\mathcal{Y}$, output $Z\in\mathcal{Z}$ and transition probability $P_{Z|Y}$ (In the following, we denote the channel $W_{Z|Y}$ for simplicity). To introduce the block coding scheme, the probability mass function of a $k$-length independent and identically distributed (i.i.d.)
		sequence $s^k=(s_1,\cdots,s_k)\in\mathcal{S}^k$ is hence given by $P_{S^k}(s^k)=\prod_{i=1}^kP_S(s_i)$, and $P_{X^k|S^k}(x^k|s^k)=\prod_{i=1}^kP_{X|S}(x_i|s_i)$. For such a communication system, a joint source-channel code with block length $n$ and transmission rate $t=\frac{k}{n}$ symbol per channel use for the memoryless source $(S,X)$ and channel $W_{Z|Y}$ is defined as a tuple of mappings: \[
		\varphi^n(\cdot): \mathcal{X}^k\rightarrow\mathcal{Y}^n,
		\psi_S^k(\cdot): \mathcal{Z}^n\rightarrow\mathcal{\hat{S}}^k,
		\psi_X^k(\cdot): \mathcal{Z}^n\rightarrow\mathcal{\hat{X}}^k.
		\]
		That is, a $k$-length information block $s^k$ extracted from semantic source is observed as a $k$-length observed block $x^k$, and then is encoded through JSCC as a codeword $y^n=(y_1,y_2,\cdots,y_n)$ $=\varphi^n(x^k)$, transmitted, received as $z^n=(z_1,z_2,\cdots,z_n)$. Two different decoders decode the same received block as $\hat{s}^k=\psi^k_S(z^n)$ and $\hat{x}^k=\psi^k_X(z^n)$, corresponding to the desired semantic and observed information sequences, respectively. To measure the source distortion, we denote $d_S^k$ and $d_X^k$ the block-wise distortion measure functions of semantic and observable sources,
	\begin{align}
		&d_S: \mathcal{S}\times\mathcal{\hat{S}}\rightarrow\mathbb{R},\quad \hspace{0.2cm}d_S^k(s^k,\hat{s}^k) \triangleq \frac{1}{k}\sum_{i=1}^{k}d_S(s_i,\hat{s}_i),\label{sj}\\
		&d_X: \mathcal{X}\times\mathcal{\hat{X}}\rightarrow\mathbb{R},\quad d_X^k(x^k,\hat{x}^k) \triangleq \frac{1}{k}\sum_{i=1}^{k}d_X(x_i,\hat{x}_i),\label{jj}
	\end{align}
	where $\hat{s}^k=(\hat{s}_1,\hat{s}_2,\cdots,\hat{s}_k)\in\mathcal{\hat{S}}^k$ and $\hat{x}^k=(\hat{x}_1,\hat{x}_2,\cdots,\hat{x}_k)\in\mathcal{\hat{X}}^k$ represent the recovered semantic and observed sequences, respectively. Given a source pair $(S,X)$ and a channel $W_{Z|Y}$, and two distortions $D_s,D_x\geq0$ on semantic and observed sequences, respectively, we define the erroneous set of $(s^k,x^k,z^n)$ that violates the distortion constraints as
	\begin{align}
		\mathcal{E} &= \Big\{{\left(s^k,x^k,z^n\right)\in\mathcal{S}^k\times\mathcal{X}^k\times\mathcal{Z}^n}: {d^k_{S}\left(s^k,\psi^k_S\left(z^n\right)\right)> D_s\text{ or }d^k_{X}\left(x^k,\psi^k_X\left(z^n\right)\right)> D_x}\Big\}\notag.
	\end{align}
Note that in remote source coding, only the indirect source is concerned, while both indirect and direct sources are recovered in a semantic-aware system. Hence how semantic distortions affect the coding scheme performance, and the tradeoff between semantic and observed distortions can be discussed. Therefore, we define a lossy JSCC scheme for semantic-aware communications which is able to recover both semantic and observable information as the following.
	\begin{definition}[Lossy Joint Source-Channel Code for semantic-aware communications]\label{JSCCdef}
		The tuple $(\varphi^n,\psi_S^k,\psi_X^k)$ is an $(n,k,D_s,D_x)$ lossy joint source-channel code for semantic source $S\in{\mathcal{S}}$, observable source $X\in\mathcal{X}$ and memoryless channel $W_{Z|Y}$ with two distortions $D_s,D_x\geq0$ if $\mathbb{P}\{\mathcal{E}\}\leq \epsilon$, where $\epsilon$ is a sufficient small positive number. The code rate $R=\frac{1}{n}\log |\mathcal{Y}^n|$.
	\end{definition}
	The JSCC excess distortion probability can be stated as
	\begin{align}
		\mathbb{P}\left\{\mathcal{E}\right\}	\triangleq\sum_{x^k\in\mathcal{X}^k}P_{X^k}\left(x^k\right)\sum_{s^k\in\mathcal{S}^k}P_{S^k|X^k}\left(s^k|x^k\right)\sum_{z^n\in\mathcal{E}(s^k,x^k)}P_{Z^n|Y^n}\left(z^n|\varphi^n\left(x^k\right)\right),\label{statement}
	\end{align}
	where 
	$\mathcal{E}\left(s^k,x^k\right)=\left\{z^n\in\mathcal{Z}^n:\left(s^k,x^k,z^n\right)\in\mathcal{E}\right\}.$
	
	Here we use summation if the alphabets are finite, for continuous source and channel pairs, Eq. \eqref{statement} can be rewritten by replacing the summation with the integration. The following definition introduces the JSCC excess distortion exponent.
	\begin{definition}\label{EJ}
		The optimal  JSCC excess distortion exponent  $E^{\mathrm{opt}}_J(P_{X},P_{S|X},W_{Z|Y},D_s,D_x,t)$ for any $D_s,D_x\geq 0$, is defined as the supremum of the set including all numbers $E$ for which there exists a sequence of $(n,k,D_s,D_x)$ JSCC scheme such that
		\begin{align}
			E\leq\liminf_{n\rightarrow\infty}\left[-\frac{1}{n}\log \mathbb{P}\left\{\mathcal{E}\right\}\right].
		\end{align}
	\end{definition}
	In the following, we try establishing upper and lower bounds on this excess distortion exponent $E^{\mathrm{opt}}_J(P_{X},P_{S|X},W_{Z|Y},D_s,D_x,t)$ in the case of discrete and memoryless source-channel pair.
	\section{Joint Source-Channel Coding Excess Distortion Exponent for Semantic-Aware Communications}\label{Sec3}
	In this section, we first investigate bounds on JSCC excess distortion exponent of a discrete and memoryless semantic-aware communication system depicted in Fig. \ref{Sys}. The bounds are composed of the source exponent and the channel exponents. We then verify that the proposed bounds can be degenerated to the known results if relax one of the distortion constraints. 
	\subsection{Statement of the Main Result}
	\begin{restatable}{theorem}{Maintheorem}\label{Maintheorem}
		For a given memoryless observable source $X$ with distribution $P_X$, a conditional distribution $P_{S|X}$, a memoryless channel with transition probability $W_{Z|Y}$, and two distortions $D_s,D_x\geq0$, which satisfies $tR(P_X,P_{S|X},D_x,D_s)$ $\leq$ $C(W_{Z|Y})$, the excess distortion exponent $E_J^{\mathrm{opt}}\left(P_X,P_{S|X},W_{Z|Y},\right.$$\left.D_s,D_x,t\right)$ for optimal $(n,k,D_s,D_x)$ JSCC with distortions $D_s,D_x$ and transmission rate $t$ is bounded by
		\begin{align}
			E_J^{\mathrm{opt}}\left(P_X,P_{S|X},W_{Z|Y},D_s,D_x,t\right)\leq 
			&\min_{R\in\mathcal{R}}\left\{t\widetilde{E}\left(\frac{R}{t},P_X,P_{S|X}\right)+E_{\mathrm{sp}}\left(R,W_{Z|Y}\right)\right\},\label{Upp}\\
			E_J^{\mathrm{opt}}\left(P_X,P_{S|X},W_{Z|Y},D_s,D_x,t\right)\geq &\min_{R\in\mathcal{R}}\left\{t\widetilde{E}\left(\frac{R}{t},P_X,P_{S|X}\right)+E_{\mathrm{ex}}\left(R,W_{Z|Y}\right)\right\}.\label{Low}
		\end{align}
		Herein
		\begin{align}
			\mathcal{R}&\triangleq\{R:tR(P_X,P_{S|X},D_x,D_s)\leq R\leq C(W_{Z|Y})\},\notag\\
			\widetilde{E}(r,P_X,P_{S|X})&\triangleq\min_{Q_X}\min_{\substack{U_{S|X}\in\mathcal{C}(\mathcal{X}\rightarrow\mathcal{S}):\\R(Q_X,U_{S|X},D_s,D_x)\geq r}}\left\{D(Q_X||P_X)+D(U_{S|X}||P_{S|X}|Q_X)\right\},\label{sourcecoding}\\
			E_{\mathrm{sp}}(R,W_{Z|Y})&\triangleq\max_{P_Y}\min_{V_{Z|Y}:I_{P_Y\times V}(Z;Y)\leq R}D(V_{Z|Y}||W_{Z|Y}|P_{Y}(Q_X))\label{rcex},\\
			E_{\mathrm{ex}}(R,W_{Z|Y})&\triangleq\max_{P_Y}\min_{P_{Y\widetilde{Y}}:P_{\widetilde{Y}}=P_{Y}}\left\{{\mathbb{E}d_{W_{Z|Y}}(Y,\widetilde{Y})}+I_{P_{Y\widetilde{Y}}}\left(Y;\widetilde{Y}\right)-R\right\}\label{channelex},
		\end{align}
		where $ D(V||W|P)\triangleq\sum_{y\in\mathcal{Y}}P(y)D\left(V(\cdot|y)||W(\cdot|y)\right)\label{Cond}$ denotes the conditional K-L divergence and $d_{W_{Z \mid Y}}(y, \widetilde{y})$ is named as the Bhattacharya distance between two channel inputs \cite[Chp~7]{Gallager_1968}. $C(W_{Z|Y})$ is the channel capacity and the rate distortion function characterizing semantic information is given by \cite[Thm~1]{Liu_Zhang_Poor_2021} as
		\begin{align}
			R(P_X,P_{S|X},D_s,D_x)&\triangleq\min_{P_{\hat{S},\hat{X}|X}}I(X;\hat{S}\hat{X}),\\
			\text{s.t.}\qquad\mathbb{E}[\hat{d}_S(X,\hat{S})]&\leq D_s\label{s},\\
			\mathbb{E}[d_X(X,\hat{X})]&\leq D_x,\label{x}
		\end{align}
		where $\hat{d}_S(x,\hat{s})=\mathbb{E}[d_S(S,\hat{s})|X=x]$, while $d_S(\cdot,\cdot)$ and $d_X(\cdot,\cdot)$ denote the component-wise distortion functions given in Eq. \eqref{sj} and Eq. \eqref{jj}, respectively.
	\end{restatable}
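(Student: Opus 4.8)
The plan is to decompose the excess-distortion event into a "source part" and a "channel part," and then combine a type-counting argument for the source with known channel-coding exponent bounds. The key conceptual device, as hinted in the introduction, is to treat the pair $(S^k, X^k)$ together with its joint type. The JSCC encoder sees only $x^k$, but the distortion constraint on $\hat s^k$ couples the analysis to $s^k$ as well. The central observation is that for a JSCC scheme operating at rate $R$ (channel uses carrying $\approx e^{nR}$ messages), the source encoder effectively performs lossy compression of $x^k$ into one of $\approx e^{nR} = e^{k R/t}$ bins; an excess-distortion event then requires the source realization $x^k$ (together with the induced $s^k$) to fall outside the region compressible to rate $R/t$ at distortions $(D_s, D_x)$, OR the channel to make a decoding error. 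This is the standard Csisz\'ar-style splitting, and the novelty is only in carrying the extra coordinate $S$.

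**The upper bound (converse), Eq.~\eqref{Upp}:** For any fixed rate $R \in \mathcal{R}$, I would argue as follows. Suppose the JSCC scheme achieves exponent $E$. By a sphere-packing / type-covering argument on the channel side, reliable transmission of more than $e^{nR}$ distinguishable codewords forces an error probability at least $e^{-n E_{\mathrm{sp}}(R, W_{Z|Y}) + o(n)}$; conversely, if the channel code carries at most $e^{nR}$ messages, then the source encoder induces a partition of $\mathcal{X}^k$ into at most $e^{nR} = e^{kR/t}$ cells, and by the type-covering lemma applied to the joint source $P_X P_{S|X}$, any atypical source pair whose "remote rate-distortion" value exceeds $R/t$ cannot be covered, contributing probability $\geq e^{-k \widetilde E(R/t, P_X, P_{S|X}) + o(k)} = e^{-n t\widetilde E(R/t,\cdot) + o(n)}$. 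Here the crucial lemma is that the probability of the source-pair type class $(Q_X, U_{S|X})$ is $e^{-k[D(Q_X\|P_X) + D(U_{S|X}\|P_{S|X}|Q_X)] + o(k)}$, which is exactly the exponent $\widetilde E$ in Eq.~\eqref{sourcecoding}, and that the reconstruction distortion seen through $\hat d_S(x,\hat s) = \mathbb{E}[d_S(S,\hat s)\mid X=x]$ is the correct object because the decoder only gets to see channel outputs, hence at best a function of $x^k$. Optimizing over which rate $R$ the adversary/nature picks gives the $\min_{R \in \mathcal{R}}$. The constraint $tR(P_X,P_{S|X},D_x,D_s) \leq C(W_{Z|Y})$ guarantees $\mathcal{R}$ is nonempty so the scheme exists in the first place.

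**The lower bound (achievability), Eq.~\eqref{Low}:** I would construct a concatenated scheme. First, a type-covering code for the joint source: for each rate $r = R/t$, cover all source pairs whose rate-distortion value is below $r$ with $e^{kr}$ codewords meeting distortions $(D_s, D_x)$; the residual probability of uncovered source pairs decays like $e^{-kt\widetilde E}$ — wait, like $e^{-k\widetilde E(r,\cdot)}$, i.e.\ $e^{-n t \widetilde E(R/t,\cdot)}$. Second, feed the resulting index (one of $e^{nR}$ values) into a channel code achieving Csisz\'ar's random-coding-type exponent $E_{\mathrm{ex}}(R, W_{Z|Y})$, so the channel-induced error is $\approx e^{-n E_{\mathrm{ex}}(R,\cdot)}$. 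Union-bounding the two error events gives excess-distortion probability $\lesssim e^{-n[t\widetilde E(R/t,\cdot) + E_{\mathrm{ex}}(R,\cdot)]}$ for every $R\in\mathcal{R}$, hence the $\min$. The step that needs the imported channel-coding theorem from \cite{Csiszar_1981} is establishing joint typicality among the semantic sequence $s^k$, the observed sequence $x^k$, and the received channel output $z^n$ simultaneously — because the semantic decoder $\psi_S^k$ must reconstruct $s^k$ from $z^n$ while having "committed" only to the $x^k$-type during encoding, so one needs that the channel-correct event (correct index decoded) combined with the source-covered event (index's reconstruction is $(D_s,D_x)$-good for the true $(s^k,x^k)$ pair) has the claimed joint exponent.

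**The main obstacle** I anticipate is exactly this last coupling: in the classical Csisz\'ar JSCC setting there is a single source, so "the reconstruction is good" is a statement about one sequence and its type, and the type-covering lemma applies cleanly. Here the covering code is built on the $X$-marginal type (the encoder's only information), but the error event also involves $d_S^k(s^k, \hat s^k)$, and $s^k$ is random given $x^k$. The resolution — and the technical heart of the proof — is that the \emph{conditional} distortion $\hat d_S(x,\hat s)=\mathbb{E}[d_S(S,\hat s)\mid X=x]$ already averages out $S$, so by a conditional law-of-large-numbers / typicality argument the empirical $d_S^k(s^k,\hat s^k)$ concentrates around $\frac1k\sum_i \hat d_S(x_i,\hat s_i)$ whenever $(s^k,x^k)$ is jointly typical for $P_X P_{S|X}$; the leftover probability of atypical $(s^k,x^k)$ pairs is absorbed into $\widetilde E$ via the inner minimization over $U_{S|X}$ with constraint $R(Q_X, U_{S|X}, D_s, D_x)\geq r$. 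Making this concentration precise — i.e., showing that covering at the rate-distortion value computed with $\hat d_S$ suffices to control the \emph{actual} semantic distortion with the stated exponent — is where I would spend the most care, and it is precisely the point at which the "channel coding theorem from \cite{Csiszar_1981}" referenced in the introduction is invoked to handle the joint-typicality bookkeeping among $(s^k, x^k, z^n)$.
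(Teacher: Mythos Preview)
Your achievability argument has a genuine gap: a union bound of two error events with probabilities $e^{-nA}$ and $e^{-nB}$ yields an overall exponent of $\min(A,B)$, \emph{not} $A+B$. The separated scheme you describe --- a fixed-rate-$r$ source cover followed by a fixed-rate-$R$ channel code, union-bounded --- therefore achieves only
\[
\max_{R\in\mathcal{R}}\min\bigl\{t\widetilde{E}(R/t,\cdot),\,E_{\mathrm{ex}}(R,\cdot)\bigr\},
\]
which is the SSCC exponent and is strictly smaller than the claimed JSCC bound $\min_{R}\{t\widetilde{E}(R/t,\cdot)+E_{\mathrm{ex}}(R,\cdot)\}$ in general. (The same structural issue affects your converse dichotomy ``either more than $e^{nR}$ messages, or at most $e^{nR}$'': that yields, for each $R$, only one of the two terms, not their sum.)

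The paper obtains the sum by a \emph{multiplicative} decomposition rather than a union bound. It rewrites $\mathbb{P}\{\mathcal{E}\}=\sum_{Q_X}\sum_{U_{S|X}}\Pr\{(s^k,x^k)\in\mathcal{T}_{Q_X}\times\mathcal{T}_{U}(x^k)\}\cdot p_c(s^k,x^k)$ and controls the \emph{conditional} channel error $p_c(s^k,x^k)$ for each joint source type separately. The device is list decoding (Lemma~\ref{lemma2}): given $z^n$, the decoder outputs the list of all $(s^k,x^k)$ jointly typical with $z^n$, whose size is bounded by $\exp\{k(H(Q_X,U_{S|X})-R(Q_X,U_{S|X},D_s,D_x))\}$. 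With code rate $R\le tH(Q_X,U_{S|X})$ and list rate $L$ as above, the list sphere-packing and expurgated bounds give $p_c(s^k,x^k)\gtrless\exp\{-nE_{\mathrm{sp}/\mathrm{ex}}(tR(Q_X,U_{S|X},D_s,D_x))\}$. This conditional channel exponent is then \emph{multiplied} by the type-class probability $\exp\{-k[D(Q_X\|P_X)+D(U_{S|X}\|P_{S|X}|Q_X)]\}$, and it is this product (sum on the exponent scale) that produces $t\widetilde{E}+E_{\mathrm{sp}/\mathrm{ex}}$. Equivalently, on the achievability side the channel code rate is \emph{adapted to the joint source type} rather than fixed; your fixed-rate concatenation is precisely what loses the JSCC gain. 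The ``channel coding theorem from \cite{Csiszar_1981}'' you were looking for is this list-decoding lemma, used not for a concentration-of-$d_S$ argument but to make the effective channel rate type-dependent.
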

	\begin{proof}
		See Appendix \ref{theorem1}.
	\end{proof}
	In Theorem \ref{Maintheorem}, upper and lower bounds of the JSCC excess distortion exponent are established. The upper bound consists of the sphere-packing bound on channel error exponent $E_{\mathrm{sp}}(R,W_{Z \mid Y})$ and the source excess distortion exponent $\widetilde{E}(r,P_X,P_{S|X})$, which considers a new fidelity on semantic source. Meanwhile the lower bound is composed of the expurgated random coding bound on channel error exponent $E_{\mathrm{ex}}(R,W_{Z \mid Y})$ and the same source exponent. Note that our result is a generalized form of Csiszar's error exponent \cite{Csiszar_1982} in which a lossy JSCC encodes a single source $X$ and imposes a unique constraint on it. The basic idea to prove the results in Theorem \ref{Maintheorem} is as follows. To obtain the source exponent, we characterize the excess distortion probability in terms of sources, by counting the numbers of typical semantic and observable sequences. The joint typicality among the semantic, observed and received sequences is necessary to be discussed. To obtain the channel exponent, we prove the sphere-packing bound and the expurgated random coding bound still hold for the semantic-aware transmission in Fig. \ref{Sys} via Csiszar's channel coding theorem \cite{Csiszar_1981}. Finally, by minimizing the source and channel exponents jointly over a group of JSCC schemes, we formulate upper and lower bounds on optimal JSCC excess distortion exponent in Eq. \eqref{Upp} and Eq. \eqref{Low}, respectively.
	
	\begin{figure} 
		\captionsetup[subfigure]{margin=120pt} 
		\begin{minipage}[t]{0.46\linewidth}
			\subfigure[]{
				\centering
				\includegraphics[width=1\textwidth]{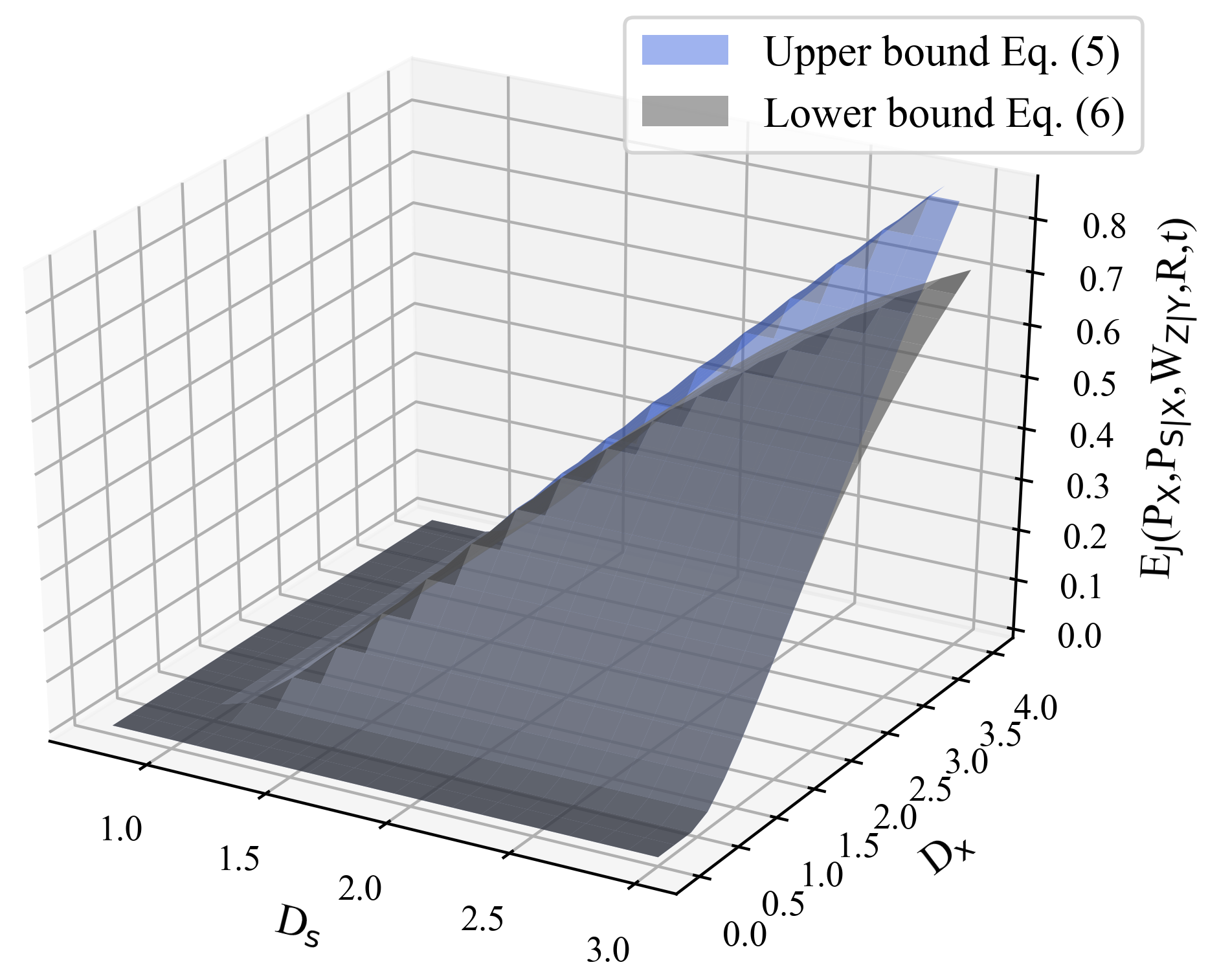}}
		\end{minipage}
		\hspace{0.1in}
		\begin{minipage}[t]{0.49\linewidth}
			\centering
			\subfigure[]{
				\label{3DSource}
				\includegraphics[width=0.95\textwidth]{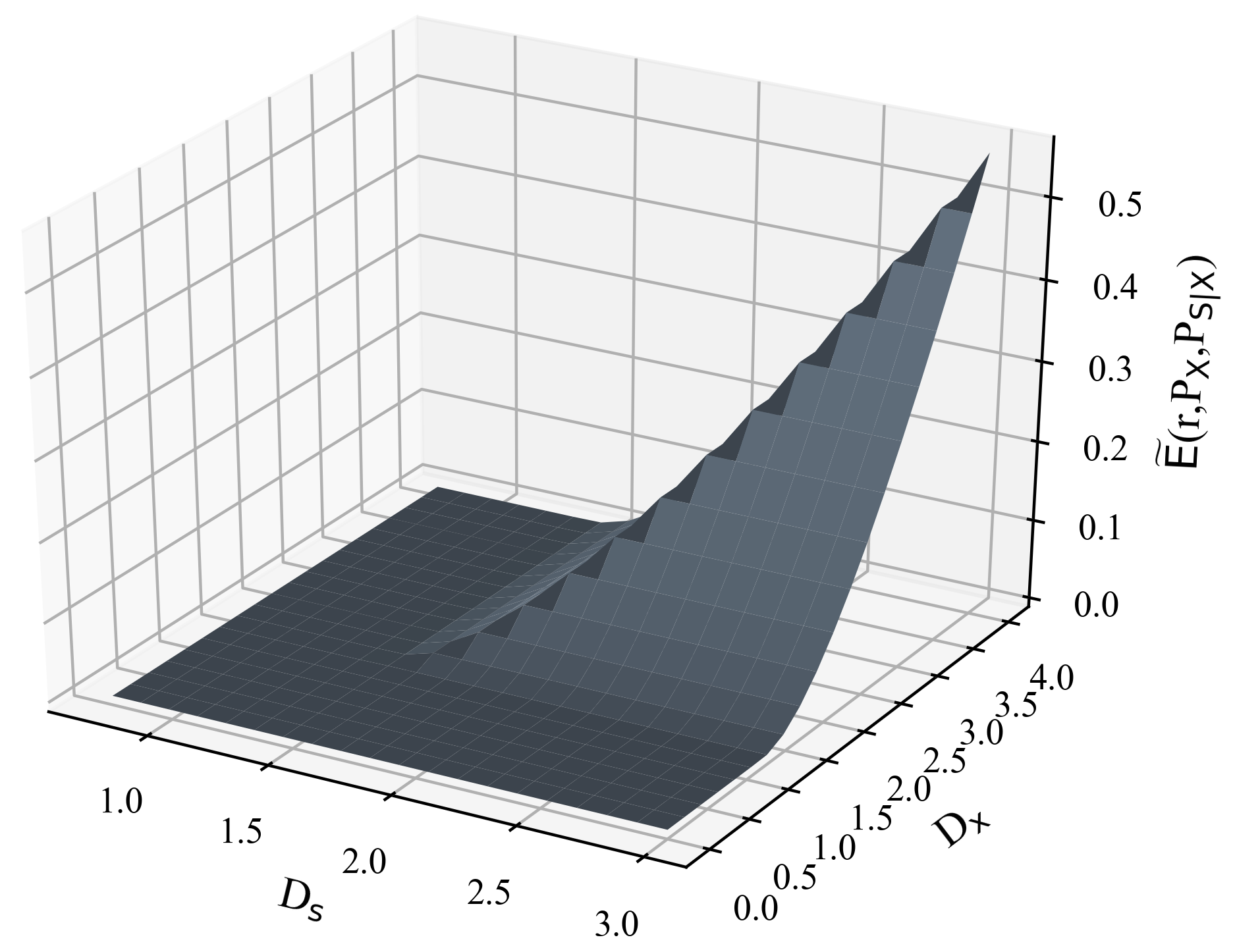}
			}
		\end{minipage}
		\caption{Characterization of the excess distortion exponent in a toy case (a) Upper and lower bounds in Eq. \eqref{Upp} and Eq. \eqref{Low}; (b) Source excess distortion exponent in Eq. \eqref{sourcecoding}}\label{3DJSC}
	\end{figure}
	We present an example of excess distortion exponent under a semantic-aware communication system in Fig. \ref{3DJSC}, in which a toy case is considered that semantic source takes value in $\mathcal{S}=\{1,2,3\}$ with equal probability, $\mathcal{X}=\{0,1\}$ and a binary symmetric channel (BSC) with flip rate $p=0.3$. The left hand side one plots the upper and lower bounds of JSCC exponent, while the right hand side one gives the source exponent in Eq. \eqref{sourcecoding}. In this case, it shows that the exponent turns to be a non-decreasing function over semantic and observed distortions. Nevertheless, the behavior of the exponent for generalized $(S,X)$ and $W_{Z|Y}$ is unpredictable.
	\subsection{Two Degenerated Cases on Semantic Source}
	In this subsection, we study the special case where $D_s=\infty$ or $D_x=\infty$, which means the constraint on semantic or observed information is relaxed. As follows, we verify that the bounds in Theorem \ref{Maintheorem} can be reduced to the known results under simpler settings, i.e., Csiszar's JSCC error exponent and Weissman's error exponent for noisy source coding \cite{Weissman_Merhav_2002}.
	\begin{corollary}[Csiszar's JSCC error exponent]\label{Csiszar}
		Without the semantic constraint, i.e., $D_s=\infty$, the communication system focuses on reconstructing the observed information $X$. Consequently, the excess distortion exponent of an optimal ($n,k,\infty,D_x$)-JSCC scheme, with the absence of achievable semantic constraint Eq. \eqref{s}, is reduced to the Csiszar's JSCC error exponent with a fidelity criterion \cite[Thm~2, Thm~4]{Csiszar_1982}.
	\end{corollary}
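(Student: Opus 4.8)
The plan is to specialize each ingredient of Theorem~\ref{Maintheorem} to the case $D_s=\infty$ and verify that it collapses to the corresponding quantity in Csiszar's single-source JSCC exponent \cite{Csiszar_1982}; I would additionally cross-check the degeneration directly at the level of the excess distortion probability \eqref{statement}.

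First I would observe that with $D_s=\infty$ the event $d_S^k(s^k,\hat s^k)>D_s$ can never occur, since any block distortion is finite, so the erroneous set reduces to $\mathcal{E}=\{(s^k,x^k,z^n):d_X^k(x^k,\psi_X^k(z^n))>D_x\}$, which depends neither on $s^k$ nor on the semantic decoder $\psi_S^k$. Marginalizing $s^k$ out of \eqref{statement} then yields $\mathbb{P}\{\mathcal{E}\}=\sum_{x^k}P_{X^k}(x^k)\sum_{z^n\in\mathcal{E}(x^k)}P_{Z^n|Y^n}(z^n\mid\varphi^n(x^k))$, which is exactly the classical JSCC excess distortion probability for the single source $X$ under distortion $d_X$ over the channel $W_{Z|Y}$. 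Hence, by Definition~\ref{EJ}, the optimal exponent $E_J^{\mathrm{opt}}(P_X,P_{S|X},W_{Z|Y},\infty,D_x,t)$ is by construction Csiszar's optimal JSCC excess distortion exponent for $(X,W_{Z|Y},D_x,t)$, and nothing about the semantic source $S$ plays a role.

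Next I would verify that the explicit bounds degenerate as claimed. In $R(Q_X,U_{S|X},\infty,D_x)$ the semantic constraint \eqref{s} is vacuous, so this rate-distortion function reduces to the ordinary $R(Q_X,D_x)=\min_{P_{\hat X|X}:\,\mathbb{E}d_X(X,\hat X)\le D_x}I(X;\hat X)$, independent of $U_{S|X}$. Consequently, in \eqref{sourcecoding} the inner feasible set $\{U_{S|X}:R(Q_X,D_x)\ge r\}$ is either all of $\mathcal{C}(\mathcal{X}\rightarrow\mathcal{S})$ (when $R(Q_X,D_x)\ge r$) or empty; in the former case the choice $U_{S|X}=P_{S|X}$ is feasible and annihilates $D(U_{S|X}\|P_{S|X}|Q_X)$, so $\widetilde{E}(r,P_X,P_{S|X})$ collapses to $\min_{Q_X:\,R(Q_X,D_x)\ge r}D(Q_X\|P_X)$, which is precisely the classical source excess distortion exponent \cite{Marton_1974, Csiszar_1982}. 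Since $E_{\mathrm{sp}}$ and $E_{\mathrm{ex}}$ in \eqref{rcex}--\eqref{channelex} never involved $S$, they are unchanged, and $\mathcal{R}$ becomes $\{R:tR(P_X,D_x)\le R\le C(W_{Z|Y})\}$. Substituting these into \eqref{Upp} and \eqref{Low} reproduces verbatim the upper and lower bounds on the JSCC excess distortion exponent given in \cite[Thm~2, Thm~4]{Csiszar_1982}.

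The only genuinely delicate point will be the bookkeeping around the convention $D_s=\infty$: I must confirm that the hypothesis $tR(P_X,P_{S|X},D_x,\infty)\le C(W_{Z|Y})$ of Theorem~\ref{Maintheorem} is exactly Csiszar's transmissibility condition $tR(P_X,D_x)\le C(W_{Z|Y})$, that the infeasibility convention (empty feasible set $\mapsto+\infty$) is compatible at the endpoints of $\mathcal{R}$, and that no residual dependence on $P_{S|X}$ sneaks back in once the decoder $\psi_S^k$ is discarded in the achievability construction. Given the development already carried out for Theorem~\ref{Maintheorem}, I expect this to be routine, so the corollary should follow purely by specialization rather than by any new argument.
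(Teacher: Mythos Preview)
Your proposal is correct and follows essentially the same approach as the paper: both argue that with $D_s=\infty$ the inner minimization over $U_{S|X}$ in \eqref{sourcecoding} is achieved at $U_{S|X}=P_{S|X}$, annihilating the conditional divergence and collapsing $\widetilde{E}$ to Marton's source exponent. Your treatment is in fact a bit more careful---you first verify the reduction directly at the level of $\mathbb{P}\{\mathcal{E}\}$ and explicitly check that the constraint set for $U_{S|X}$ becomes unconstrained once $R(Q_X,U_{S|X},\infty,D_x)=R(Q_X,D_x)$---whereas the paper simply notes that choosing $U_{S|X}=P_{S|X}$ zeros out $D(U_{S|X}\|P_{S|X}\mid Q_X)$.
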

	\begin{proof}
		This corollary can be obtained intuitively due to the neglect of semantic information. For a rigorous proof, according to the excess distortion exponent of semantic source given in Eq. \eqref{sourcecoding}, the conditional divergence can be rewritten as
		\begin{align}
			D\left(U_{S|X}||P_{S|X}|Q_X\right)&=D\left(Q_X\times U_{S|X}||Q_X\times P_{S|X}\right)
			=
			\left\{
			\begin{aligned}
				& 0,  \quad\quad\text{if    }Q_X\times U_{S|X}=Q_X\times P_{S|X}\\
				&\infty,\quad\text{   otherwise}
			\end{aligned}
			\right.,\notag
		\end{align}
		which implies the infimum of this term can be obtained by choosing $U_{S|X}=P_{S|X}$ such that $\min_{U_{S|X}}D\left(U_{S|X}||P_{S|X}|Q_X\right)=0$.
	\end{proof}
	Comparing Theorem \ref{Maintheorem} and Corollary \ref{Csiszar}, it can be observed that the JSCC error exponent  for semantic-aware systems contains an extra non-negative term and is larger than the Csiszar's error exponent in non-trivial cases, which is translated as a faster error decay speed owing to the involvement of the semantic information.
	\begin{corollary}[Weissman and Merhav's Error Exponent]
		Without the constraint on observed information, i.e., $D_x=\infty$, the model is degenerated to an indirect source coding and communication system. Then Eq. \eqref{sourcecoding} can be reduced to Weissman and Merhav's error exponent \cite[Th~1]{Weissman_Merhav_2002}.
	\end{corollary}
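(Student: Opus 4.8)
The plan is to show that, once the observable-distortion constraint is switched off ($D_x=\infty$), the two nested minimizations in Eq.~\eqref{sourcecoding} collapse into a single minimization of a relative entropy over joint test distributions, and that the resulting quantity is exactly the noisy-source excess-distortion exponent of \cite[Th~1]{Weissman_Merhav_2002}. The two tools I would rely on are: the chain rule for relative entropy, which fuses $D(Q_X\|P_X)$ and $D(U_{S|X}\|P_{S|X}|Q_X)$ into one divergence between joint laws; and the classical reduction of a remote (indirect) rate-distortion problem with an unconstrained observable reconstruction to an ordinary rate-distortion problem driven by the modified distortion $\hat{d}_S(x,\hat{s})=\mathbb{E}[d_S(S,\hat{s})\mid X=x]$.

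First I would introduce the joint test distribution $\widetilde{P}_{X,S}\triangleq Q_X\times U_{S|X}$ and record
\[
D(Q_X\|P_X)+D(U_{S|X}\|P_{S|X}|Q_X)=D(Q_X\times U_{S|X}\,\|\,P_X\times P_{S|X})=D(\widetilde{P}_{X,S}\,\|\,P_{X,S}),
\]
so that the objective in Eq.~\eqref{sourcecoding} depends on $(Q_X,U_{S|X})$ only through $\widetilde{P}_{X,S}$. Next I would simplify the inner constraint $R(Q_X,U_{S|X},D_s,\infty)\ge r$. With $D_x=\infty$, Eq.~\eqref{x} is vacuous; in $\min_{P_{\hat{S},\hat{X}|X}}I(X;\hat{S}\hat{X})$ one may take $\hat{X}$ independent of $(X,\hat{S})$, since $I(X;\hat{S}\hat{X})=I(X;\hat{S})+I(X;\hat{X}|\hat{S})\ge I(X;\hat{S})$ with equality along the Markov chain $X-\hat{S}-\hat{X}$, so $\hat{X}$ drops out. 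Reading $\hat{d}_S$ under the same conditional $U_{S|X}$ that appears in the divergence, the semantic constraint Eq.~\eqref{s} becomes $\mathbb{E}_{\widetilde{P}_{X,S}}[d_S(S,\hat{S})]\le D_s$, hence
\[
R(Q_X,U_{S|X},D_s,\infty)=\min_{P_{\hat{S}|X}:\ \mathbb{E}_{\widetilde{P}_{X,S}}[d_S(S,\hat{S})]\le D_s}I_{\widetilde{P}_X\times P_{\hat{S}|X}}(X;\hat{S})\triangleq R_{\mathrm{I}}(\widetilde{P}_{X,S},D_s),
\]
the Dobrushin--Tsybakov/Witsenhausen remote rate-distortion function evaluated at the mismatched law $\widetilde{P}_{X,S}$. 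Substituting both observations into Eq.~\eqref{sourcecoding} gives
\[
\widetilde{E}(r,P_X,P_{S|X})=\min_{\widetilde{P}_{X,S}:\ R_{\mathrm{I}}(\widetilde{P}_{X,S},D_s)\ge r}D(\widetilde{P}_{X,S}\,\|\,P_{X,S}),
\]
which is precisely the noisy-source excess-distortion exponent of \cite[Th~1]{Weissman_Merhav_2002} once the notation is matched (their observation kernel $\leftrightarrow P_{S|X}$, their reconstruction alphabet $\leftrightarrow\hat{\mathcal{S}}$).

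I expect the main obstacle to be the constraint-simplification step rather than the algebra. Two things there need care: first, arguing rigorously that discarding $\hat{X}$ is without loss of optimality when $D_x=\infty$ — this is the short Markov-chain argument above, but it uses existence of a minimizer, which follows from compactness of the relevant simplices together with continuity of mutual information; second, reconciling the modified-distortion conventions, since the reduction to $\mathbb{E}_{\widetilde{P}_{X,S}}[d_S]\le D_s$ only succeeds if the conditional used to form $\hat{d}_S$ inside $R(Q_X,U_{S|X},\cdot,\cdot)$ is the test conditional $U_{S|X}$, consistent with how \cite{Weissman_Merhav_2002} define the noisy rate-distortion function under a mismatched distribution. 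A minor residual technicality is that the feasible set $\{\widetilde{P}_{X,S}:R_{\mathrm{I}}(\widetilde{P}_{X,S},D_s)\ge r\}$ and the (possibly strict) rate condition used in \cite{Weissman_Merhav_2002} yield the same value of the minimized divergence; this follows from convexity and continuity of $R_{\mathrm{I}}(\cdot,D_s)$ in the test distribution together with compactness, which also ensures attainment of the minimum. With these points settled, the remainder is routine.
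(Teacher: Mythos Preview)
Your argument is correct, but it proceeds along a different axis than the paper's. You fuse the two divergences via the chain rule into a single $D(\widetilde P_{X,S}\|P_{X,S})$, drop $\hat X$ by the Markov-chain inequality, and then recognise the residual constraint as the Dobrushin--Tsybakov indirect rate-distortion function $R_{\mathrm I}(\widetilde P_{X,S},D_s)\ge r$; the result is a Marton-type exponent for the remote problem, which you then identify with \cite[Th~1]{Weissman_Merhav_2002}. The paper instead keeps $Q_X$ and $U_{S|X}$ separate and \emph{factorises} the test channel through the reconstruction alphabet, writing $U_{S|X}=\mathscr W\times\mathscr V$ with $\mathscr W\in\mathcal C(\mathcal X\to\hat{\mathcal S})$ constrained by $I_{Q_X\times\mathscr W}(X;\hat S)\le r$ and $\mathscr V\in\mathcal C(\hat{\mathcal S}\times\mathcal X\to\mathcal S)$ constrained by $\mathbb E_{Q_X\times\mathscr W\times\mathscr V}[d_S(S,\hat S)]>D_s$. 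This two-channel decomposition is precisely the form in which Weissman and Merhav state their exponent, so the paper's route lands directly on their expression without a further translation step. Your route is cleaner conceptually and makes transparent that the whole object is just the Marton exponent for the induced indirect problem; the price is that the very last identification (``this is exactly \cite[Th~1]{Weissman_Merhav_2002}'') still requires, at least implicitly, unpacking $R_{\mathrm I}(\widetilde P_{X,S},D_s)\ge r$ into the $(\mathscr W,\mathscr V)$ form in which their theorem is written. That unpacking is standard, and you have already flagged the one genuinely delicate point, namely that $\hat d_S$ must be computed under the test conditional $U_{S|X}$ rather than $P_{S|X}$.
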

	\begin{proof}
		Note that in this case, we focus on a noisy source excess distortion exponent, where the reconstruct semantic sequence $\hat{s}^k$ is deterministic for fixed $x^k$ and the compression scheme, since $\hat{s}^k=\psi_S^k\left(\varphi^k\left(x^k\right)\right)$. Given the sequences $\left(s^k,x^k\right)$ and the excess distortion event $\{d_S^k\left(s^k,\hat{s}^k\right)>D_s\}$, the rate-distortion function becomes
		\begin{align}
			R(P_X,P_{S|X},D_s,\infty)\triangleq\min_{P_{\hat{S}|X}}I(X;\hat{S}),\qquad
			\text{s.t.}\qquad\mathbb{E}[\hat{d}_S(X,\hat{S})]\leq D_s.\notag
		\end{align}
		The excess distortion is established directly between the transmitted $x^k$ and the reconstructed sequence $\hat{s}^k$. Thus, given $x^k$ of type $Q_X$, the test channel $U_{S|X}$ can be rewritten as:
		\begin{align}
			&U_{S|X} = \mathscr{W}\times\mathscr{V},\notag\\
			&\mathscr{W}\in\mathcal{C}\left(\mathcal{X}\rightarrow\mathcal{\hat{S}}\right):r\geq I_{Q_X\times\mathscr{W}}\left(X,\hat{S}\right)\notag,\\
			&\mathscr{V}\in\mathcal{C}\left(\hat{\mathcal{S}}\times\mathcal{X}\rightarrow\mathcal{S}\right):\mathbb{E}_{Q_X\times\mathscr{W}\times\mathscr{V}}d_S\left(S,\hat{S}\right)>D_s\notag,
		\end{align}
		in order to impose the constraints on mutual information and the distortion threshold, which directly yields the conclusion in \cite{Weissman_Merhav_2002}.
	\end{proof}
	\section{Achievable Excess Distortion Exponent in Semantic-Aware MIMO Systems}\label{Sec4}
	It is observed that Theorem \ref{Maintheorem} is easy to be extended into different scenarios. In wireless networks, it may be of interest to consider a MIMO block fading channel rather than a simple DMC. In this section, we present an achievable statement of JSCC excess distortion exponent for a semantic-aware MIMO system composed of Gaussian distributed semantic source and block fading MIMO channel. Furthermore, the optimization problem of exponent is solved in a simple case for explicit analysis.
	
	A proposition is used to claim the extension of Theorem \ref{Maintheorem} to a more general case.
	\begin{proposition}\label{p1}
		For a communication system, which consists of a joint Gaussian vector pair $(\tb{S},\tb{X})$ and a Gaussian channel $\tb{W}_{\tb{Z}|\tb{Y}}$ with arbitrary memory, the lower bound in Eq. \eqref{Low} holds. 
	\end{proposition}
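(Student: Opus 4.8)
The plan is to deduce Proposition~\ref{p1} from Theorem~\ref{Maintheorem} by a discretization-and-blocking argument followed by a limiting passage, in the spirit of the continuous-alphabet extensions of Csiszar's exponent. Since only the achievability bound \eqref{Low} is asserted, one avoids the sphere-packing converse, whose continuous-channel version is considerably more delicate; both the expurgated channel term $E_{\mathrm{ex}}$ and the source term $\widetilde{E}$ are of a form that behaves well under quantization.

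First I would quantize. For an integer $m$, fix finite partitions of the alphabets of $\tb{S},\tb{X},\hat{\tb{S}},\hat{\tb{X}}$ into cells of diameter $O(1/m)$, producing a discrete memoryless source $(\tb{S}_m,\tb{X}_m)$ with quantized distortion measures that dominate the originals up to an $o_m(1)$ slack, and quantize the channel output $\tb{Z}$ while restricting the channel input to a finite constellation $\mathcal{Y}_m$. Standard continuity of the rate--distortion functional together with lower semicontinuity of relative entropy and mutual information then give, as $m\to\infty$,
\[
R(P_{\tb{X}_m},P_{\tb{S}_m|\tb{X}_m},D_s,D_x)\to R(P_{\tb{X}},P_{\tb{S}|\tb{X}},D_s,D_x),\qquad \widetilde{E}\Big(r,P_{\tb{X}_m},P_{\tb{S}_m|\tb{X}_m}\Big)\to\widetilde{E}\big(r,P_{\tb{X}},P_{\tb{S}|\tb{X}}\big)
\]
uniformly on compact $r$-intervals, and likewise for the expurgated exponent and the capacity of the quantized channel. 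To absorb the memory of $\tb{W}_{\tb{Z}|\tb{Y}}$, I would treat a block of $L$ consecutive channel uses (for the MIMO model, one coherence block) as a single super-letter, inserting a negligible-rate guard interval between successive blocks so that the super-letter channel $\tb{W}^{(L)}$ is memoryless. Applying \eqref{Low} to the discrete memoryless pair $\big((\tb{S}_m,\tb{X}_m),\tb{W}^{(L)}_m\big)$ at transmission rate $tL$ and unfolding the super-letters shows that
\[
\min_{R\in\mathcal{R}_{m,L}}\left\{t\,\widetilde{E}\Big(\tfrac{R}{t},P_{\tb{X}_m},P_{\tb{S}_m|\tb{X}_m}\Big)+\tfrac{1}{L}E_{\mathrm{ex}}\big(RL,\tb{W}^{(L)}_m\big)\right\}
\]
is an achievable per-channel-use excess distortion exponent for the original system, for every fixed $m$ and $L$.

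It then remains to let $m\to\infty$ and $L\to\infty$. The source term converges by the continuity above; the feasible set $\mathcal{R}_{m,L}$ converges to $\mathcal{R}$ since capacity and the rate--distortion function do; and $\tfrac{1}{L}E_{\mathrm{ex}}(RL,\tb{W}^{(L)}_m)$ converges to the expurgated exponent \eqref{channelex} of the Gaussian channel with memory, the block Bhattacharyya distance being additive over a product channel and subadditive in $L$ in the presence of memory so that the normalized per-letter limit exists. Since the objective is continuous and the convergence is uniform on the compact interval $\mathcal{R}$, the limit can be carried through the minimization, which yields exactly the right-hand side of \eqref{Low} with the continuous quantities and proves the claim.

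I expect the main obstacle to be this last convergence of the channel part: showing that the finite-constellation, quantized-output, $L$-blocked expurgated exponent converges, as the constellation refines and $L\to\infty$, to the expurgated exponent of the underlying Gaussian channel with memory. This requires a careful measure-theoretic approximation of the continuous Bhattacharyya distance $d_{\tb{W}_{\tb{Z}|\tb{Y}}}(\tb{y},\widetilde{\tb{y}})$ by its quantized counterparts, control of the input-distribution optimization across the discretization (in particular that the optimizing input laws can be taken to respect the finite constellations in the limit), and verification that the guard-interval overhead and the block-boundary correlations contribute only $o(1)$ to the normalized exponent.
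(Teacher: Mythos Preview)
Your quantize-and-block program is a legitimate route to the achievability bound, but it is \emph{not} the one the paper takes. The paper's proof is essentially a one-line appeal to the Fenchel-duality equivalence between Csiszar's divergence form of the expurgated/random-coding exponent and Gallager's parametric reliability function, as developed in Zhong's thesis \cite{Zhong08} and \cite{Zhong_Alajaji_Campbell_2007,Zhong_Alajaji_Campbell_2009}. Once the channel term is rewritten in Gallager's $E_0(\rho,P_Y)$ form, the whole computation lives directly in the continuous (Gaussian, possibly with memory) setting---no discretization or limiting passage is needed, and the extension to Gaussian sources and channels is immediate because Gallager's expressions are defined as integrals from the outset. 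This is what buys the paper its brevity: the convergence of the quantized expurgated exponent to the continuous one, which you correctly flag as the chief obstacle in your approach, is completely sidestepped.

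Your approach, by contrast, is more self-contained in that it does not invoke an external duality theorem, but the price is exactly the technical burden you identify at the end: proving that $\tfrac{1}{L}E_{\mathrm{ex}}(RL,\tb{W}^{(L)}_m)$ converges to the continuous expurgated exponent as $m,L\to\infty$, with enough uniformity to exchange limit and minimum. Two points deserve care. First, the subadditivity-in-$L$ claim for the block Bhattacharyya distance is not automatic for an arbitrary Gaussian channel with memory; you would need some stationarity or mixing hypothesis (and for the MIMO block-fading model the paper ultimately cares about, the super-letter is simply one coherence block, so $L$ is fixed and no limit in $L$ is actually needed). Second, the uniform convergence of $\widetilde{E}(r,P_{\tb{X}_m},P_{\tb{S}_m|\tb{X}_m})$ to its Gaussian counterpart is not a routine consequence of lower semicontinuity of relative entropy: the function is defined by a constrained minimization whose feasible set moves with $m$, so you need an epi-convergence or $\Gamma$-convergence argument rather than pointwise continuity. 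None of this is fatal, but it is substantially more work than the duality shortcut the paper uses.
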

	\begin{proof}
		Note that we considered discrete sources and channel with finite input and output alphabets before. Nevertheless, in \cite[Ch~4]{Zhong08} Zhong etc. combines Csiszar's exponents with Gallager's reliability functions in discrete cases via Fenchel duality, in which Gallager's statements are verified to be powerful tools on error exponent and are easily extended to the case of continuous-alphabet with arbitrary memory. It is worth mentioning that in Gaussian case, only random coding bound in Eq. \eqref{Low} can be extended and the sphere-packing bound remains unclear. For details, the reader can turn to \cite{Zhong_Alajaji_Campbell_2007,Zhong_Alajaji_Campbell_2009} for a rigorous proof of JSCC error exponent under the setting of Gaussian distributed system and a one-order Markovian system, respectively. Hence, by slightly abusing the notation defined in Sec. \ref{Sec2}, we can easily obtain this conclusion.
	\end{proof}
		Note that the lower bound in Eq. \eqref{Low} can be extended into a MIMO communication setting. However, even not consider the semantic source, the converse proof on error exponent is not easy to obtain in such a MIMO case. Among these proofs, Fano inequality and hypothesis testing show unavailable bounds on exponent, while sphere-packing bound in Eq. \eqref{Upp} though yields a tight bound in single antenna case, it is difficult to be extended into the multi-antennas case due to the following two aspects. First, for the sphere-packing of codeword, the solid angle of the Voronoi regions for matrices in continuous alphabet is difficult to characterize, hence the overall cone is not a circular cone. Second, the strong converse for JSCC in Lemma \ref{JSCCTH} does not necessarily hold in a MIMO communication system, since we cannot use the weak law of large numbers (WLLN) for memory case \cite[Appendix 1]{Zhong_Alajaji_Campbell_2009}. In summary, we present an achievability bound as follows, which reveals an achievable excess distortion exponent of JSCC for semantic-aware MIMO systems.
	\begin{theorem}\label{Gaussian}
		For the above semantic-aware communication system, the observed source $\tb{X}$ follows a Gaussian vector distribution $\mathcal{N}(\tb{0}_{q},\bm{\Sigma}_X)$, where $\bm{\Sigma}_X$ is an $q\times q$ positive semi-definite matrix. Meanwhile the $\ell$-length semantic source $\tb{S}$ is given by 
		\[
		\tb{S} = \tb{h}\tb{X}+\tb{N},
		\]
		where $\tb{h}$ is an $\ell\times q$ matrix, and $\tb{N}$ is a random vector follows $\mathcal{N}(\tb{0}_{\ell},\bm{\Sigma}_N)$, which is independent of $\tb{X}$. The quadratic distortion measures become $d_S(\tb{s},\hat{\tb{s}})=\mathrm{tr}\{(\tb{s}-\hat{\tb{s}})(\tb{s}-\hat{\tb{s}})^H\}$ and $d_S(\tb{x},\hat{\tb{x}})=\mathrm{tr}\{(\tb{x}-\hat{\tb{x}})(\tb{x}-\hat{\tb{x}})^H\}$, where $\hat{\tb{s}}$ and $\hat{\tb{x}}$ refer to the recovered source vectors. Furthermore, a MIMO communication system contains $n_T$ transmit and $n_R$ receive antennas, where the block fading channel $\tb{W}_{\tb{Z}|\tb{Y}}$ remains invariant for $N_c$ symbols in each coherence time. In each observation composed of $N_b$ independent coherence intervals, which amount to $N_bN_c$ symbols, the received matrix  $\tb{Z}_i\in\mathbb{C}^{n_R\times N_c}$ at the $i$-th interval can be formulated as
		\begin{align}
			\tb{Z}_i=\tb{H}_i\tb{Y}_i+\tb{W}_i,\quad i=1,2,\cdots,N_b\notag,
		\end{align} 
		where $\tb{Y}_i\in\mathbb{C}^{n_T\times N_c}$ is the channel input matrix, $\tb{H}_i$ is the channel state matrix and $\tb{W}_i$ is the additive white Gaussian noise matrix, namely $\tb{W}_i\sim\mathcal{MN}(\tb{0}_{n_R\times N_c},N_w\tb{I}_{n_R},\tb{I}_{N_c})$. Here $N_w$ is the noise coefficient. The transition probability with perfect CSI at the receiver can be stated as
		\begin{align}
			p(\tb{Z}|\tb{Y},\tb{H})=(\pi N_w)^{-n_RN_c}\exp\left\{-\frac{1}{N_w}(\tb{Z}-\tb{H}\tb{Y})(\tb{Z}-\tb{H}\tb{Y})^H\right\},\label{MIMOtran}
		\end{align}
		where the subscript $i$ is dropped for simplicity since the channel is memoryless for each coherence interval. Note that $\tb{Y}$ denotes the power constrained input as $\frac{1}{N_c}\mathbb{E}[\mathrm{tr}\{\tb{Y}\tb{Y}^H\}]=\mathrm{tr}\{\tb{Q}\}\leq \mathcal{P}$. Then, there exists an $(n,k,D_s,D_x)$-lossy JSCC for this semantic-aware MIMO communication system with excess distortion exponent
		\begin{align}
			E_J^{\mathrm{opt}}\left(\tb{P}_\tb{X},\tb{P}_{\tb{S}|\tb{X}},\tb{W}_{\tb{Z}|\tb{Y}},D_s,D_x,t\right)&=\min_{R\in\mathcal{R}}E_J\left(\tb{P}_\tb{X},\tb{P}_{\tb{S}|\tb{X}},\tb{W}_{\tb{Z}|\tb{Y}},R,t\right)\notag\\
			&=\min_{R\in\mathcal{R}}\left\{t\widetilde{E}^\mathrm{G}\left(\frac{R}{t},\tb{P}_\tb{X},\tb{P}_{\tb{S}|\tb{X}}\right)+E^{\mathrm{MIMO}}\left(R,\tb{W}_{\tb{Z}|\tb{Y}}\right)\right\},\label{MMo}
		\end{align}
		where 
		\begin{align}
			\widetilde{E}^\mathrm{G}(r,\tb{P}_\tb{X},\tb{P}_{\tb{S}|\tb{X}})=&\min_{\bm{\Delta}:\bm{O}\prec\bm{\Delta}\preceq\bm{\Sigma_X},\atop\mathrm{tr}\{\bm{\Delta}\}\leq D_x}\min_{\tb{A}\in\mathbb{C}^{q\times q}:\atop\det(\tb{A})=\det(\bm{\Delta})e^{2r}}\min_{\tb{B}\in\mathbb{C}^{\ell\times\ell}:\atop\mathrm{tr}\left\{\tb{B}\right\}=D_s-\mathrm{tr}\{\tb{h}^H\bm{\Delta}\tb{h}\}}\Bigg\{\frac{1}{2}\log\frac{\det(\bm{\Sigma}_X)\det(\bm{\Sigma}_N)}{\det(\bm{\Delta})e^{2r}\det(\tb{B})}\notag\\
			&+\mathrm{tr}\left\{\bm{\Sigma}_X^{-1}\tb{A}+\bm{\Sigma}_N^{-1}\tb{B}+\left(\bm{\Sigma}_N^{-1}\tb{B}-\tb{I}_{\ell}\right)\tb{h}^H\bm{\Sigma}_X\tb{h}\right\}\Bigg\}-\ell-q,\label{Gsource}
		\end{align}
		\begin{align}
			E^{\mathrm{MIMO}}\left(R,\tb{W}_{\tb{Z}|\tb{Y}}\right)=&\max _{0 \leq \rho \leq 1}\left\{\max _{\delta \geq 0}E_\mathrm{ex}(\tb{Q},\rho,\delta,N_c) -\rho R\right\},\\
			E_\mathrm{ex}\left(\tb{Q}, \rho, \delta, N_{\mathrm{c}}\right)=&2\delta\rho\mathcal{P}\notag\\
			&-\frac{1}{N_c}\ln\mathbb{E}_{\textbf{H}}\left\{\det{\left(\tb{Q}\tb{A}\left(\textbf{I}_{n_T}-\tb{Q}\left(\frac{\tb{H}^H\tb{H}\tb{A}^{-1}\tb{H}^H\tb{H}}{16N_w^2\rho^2}-\frac{\tb{H}^H\tb{H}}{4N_w\rho}+\delta\right)\right)\right)^{-N_c\rho}}\right\}\label{exb},\\
			\tb{A}=&\delta\tb{I}_{n_T}-\tb{Q}^{-1}-\frac{1}{4N_w\rho}\tb{H}^H\tb{H}\label{nb}.
		\end{align}
	\end{theorem}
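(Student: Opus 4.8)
The strategy is to \emph{instantiate the lower bound of Theorem~\ref{Maintheorem}} in the Gaussian/MIMO setting. By Proposition~\ref{p1}, the bound \eqref{Low} remains valid for a jointly Gaussian source pair together with a Gaussian channel of arbitrary memory; and since, conditioned on the CSI matrices $\tb{H}_1,\dots,\tb{H}_{N_b}$, the block fading channel \eqref{MIMOtran} is memoryless across the $N_b$ coherence intervals (with memory only \emph{within} a block of $N_c$ symbols), it falls under that extension. It therefore suffices to show that the source term $\widetilde{E}(R/t,\cdot)$ appearing in \eqref{Low} specializes to the Gaussian expression $\widetilde{E}^{\mathrm{G}}(R/t,\cdot)$ of \eqref{Gsource}, and that the expurgated channel term $E_{\mathrm{ex}}(R,\cdot)$ specializes to $E^{\mathrm{MIMO}}(R,\cdot)$ of \eqref{exb}--\eqref{nb}; the outer minimisation over $R\in\mathcal{R}$ is then inherited verbatim, which gives \eqref{MMo}.

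\textbf{The Gaussian source exponent.} In \eqref{sourcecoding} I would restrict the ``type'' $Q_X$ to a zero-mean Gaussian law $\mathcal{N}(\tb{0}_q,\tb{A})$ --- without loss of optimality, because the distortions are quadratic, $P_X$ and $P_{S|X}$ are centred, and for fixed second moments the Gaussian law minimises the divergence to a Gaussian (a maximum-entropy argument, as in the continuous-alphabet treatments cited in Proposition~\ref{p1}) --- and restrict the test channel $U_{S|X}$ to a Gaussian conditional law that preserves the linear structure $\tb{S}=\tb{h}\tb{X}+\tb{N}'$ but carries a modified noise covariance $\tb{B}$. The two divergences in \eqref{sourcecoding} then become closed forms through the Gaussian identity
\[
D\!\left(\mathcal{N}(\bm{\mu}_1,\bm{\Sigma}_1)\,\|\,\mathcal{N}(\bm{\mu}_2,\bm{\Sigma}_2)\right)=\tfrac12\!\left[\log\frac{\det\bm{\Sigma}_2}{\det\bm{\Sigma}_1}+\mathrm{tr}\!\left(\bm{\Sigma}_2^{-1}\bm{\Sigma}_1\right)+(\bm{\mu}_2-\bm{\mu}_1)^{H}\bm{\Sigma}_2^{-1}(\bm{\mu}_2-\bm{\mu}_1)-q\right],
\]
and averaging the conditional divergence over $\tb{X}\sim\mathcal{N}(\tb{0},\tb{A})$ yields precisely the aggregate $\mathrm{tr}\{\bm{\Sigma}_X^{-1}\tb{A}+\bm{\Sigma}_N^{-1}\tb{B}+(\bm{\Sigma}_N^{-1}\tb{B}-\tb{I}_\ell)\tb{h}^{H}\bm{\Sigma}_X\tb{h}\}$ together with a log-determinant ratio. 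It remains to rewrite the constraint $R(Q_X,U_{S|X},D_s,D_x)\ge r$: evaluating the semantic rate-distortion function of \cite[Thm~1]{Liu_Zhang_Poor_2021} for jointly Gaussian $(\tb{S},\tb{X})$ with quadratic distortion, the optimal reconstructions are the MMSE estimates, so $I(\tb{X};\hat{\tb{S}}\hat{\tb{X}})=\tfrac12\log\frac{\det\tb{A}}{\det\bm{\Delta}}$ with $\bm{\Delta}$ the reconstruction-error covariance of $\tb{X}$, the observed constraint becomes $\mathrm{tr}\{\bm{\Delta}\}\le D_x$ with $\bm{O}\prec\bm{\Delta}\preceq\tb{A}$, and --- using $\hat{d}_S(\tb{x},\hat{\tb{s}})=\mathbb{E}[\mathrm{tr}\{(\tb{S}-\hat{\tb{s}})(\tb{S}-\hat{\tb{s}})^{H}\}\mid\tb{X}=\tb{x}]$ --- the semantic constraint splits as $\mathrm{tr}\{\tb{h}^{H}\bm{\Delta}\tb{h}\}+\mathrm{tr}\{\tb{B}\}\le D_s$. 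Substituting $\det\tb{A}=\det(\bm{\Delta})e^{2r}$ at the active rate constraint and minimising over $(\bm{\Delta},\tb{A},\tb{B})$ subject to these relations gives \eqref{Gsource}.

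\textbf{The MIMO expurgated channel exponent.} Here I would start from the Gallager form underlying $E_{\mathrm{ex}}(R,W_{Z|Y})$ in \eqref{channelex}, namely $\max_{0\le\rho\le1}\{E_x(\rho,\tb{Q})-\rho R\}$ with $E_x$ governed by the Bhattacharyya distance $d_{W_{\tb{Z}|\tb{Y}}}(\tb{y},\tilde{\tb{y}})=-\ln\int\sqrt{p(\tb{z}|\tb{y})\,p(\tb{z}|\tilde{\tb{y}})}\,d\tb{z}$, augmented by the input-constraint multiplier needed for continuous-alphabet channels. For the transition law \eqref{MIMOtran} with given $\tb{H}$, the Bhattacharyya integral is Gaussian and equals $\exp\{-\tfrac{1}{8N_w}\mathrm{tr}[(\tb{Y}-\tilde{\tb{Y}})^{H}\tb{H}^{H}\tb{H}(\tb{Y}-\tilde{\tb{Y}})]\}$; raising it to the power $1/\rho$, enforcing the power constraint $\tfrac{1}{N_c}\mathbb{E}\,\mathrm{tr}\{\tb{Y}\tb{Y}^{H}\}\le\mathcal{P}$ by a Chernoff tilt with multiplier $\delta\ge0$ (which produces the term $2\delta\rho\mathcal{P}$, the factor $2$ coming from the two inputs $\tb{Y},\tilde{\tb{Y}}$), and averaging over the independent Gaussian input pair with column covariance $\tb{Q}$, one is left with a Gaussian integral in $(\tb{Y},\tilde{\tb{Y}})\in\mathbb{C}^{n_T\times N_c}$ that, owing to its $N_c$ columns, produces a $-N_c\rho$ power of a determinant; collecting terms yields $\det\!\big(\tb{Q}\tb{A}(\tb{I}_{n_T}-\tb{Q}(\cdots))\big)^{-N_c\rho}$ with $\tb{A}$ as in \eqref{nb}. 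Taking the expectation over the random channel matrix $\tb{H}$ --- which is legitimate inside the logarithm precisely because the $N_b$ coherence blocks are i.i.d., so the product of Bhattacharyya factors and the expectation over $(\tb{H}_i)$ both factor across blocks --- gives $E_{\mathrm{ex}}(\tb{Q},\rho,\delta,N_c)$ of \eqref{exb}, and maximising over $\delta\ge0$ and $0\le\rho\le1$ (with a valid input covariance $\tb{Q}$) produces $E^{\mathrm{MIMO}}(R,\tb{W}_{\tb{Z}|\tb{Y}})$.

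I expect the main obstacle to be this last computation: the expectation over $\tb{H}$ of $\det(\cdot)^{-N_c\rho}$ is a genuine random-matrix integral with no scalar counterpart --- there is no circular symmetry to exploit --- and it is precisely what forces the hypergeometric function of matrix argument into the statement. A secondary point, which is why the result is stated as an \emph{achievable} exponent rather than as a matching converse, is that only the expurgated half of Theorem~\ref{Maintheorem} survives the extension: the sphere-packing bound \eqref{Upp} cannot be carried over, since the Voronoi cells of a matrix codebook are not circular cones and the strong converse of Lemma~\ref{JSCCTH} fails without a weak law of large numbers across the channel memory.
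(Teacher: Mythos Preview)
Your proposal is correct and follows essentially the same route as the paper's Appendix~\ref{Gauss}: instantiate the lower bound \eqref{Low} via Proposition~\ref{p1}, evaluate the two divergences in \eqref{sourcecoding} on Gaussian auxiliary laws $\mathcal{N}(\tb{0},\tb{A})$ and $\mathcal{N}(\tb{h}\tb{x},\tb{B})$ together with the Gaussian rate-distortion expression, then pass to Gallager's form of the expurgated bound and carry out the matrix Gaussian integrals over $\tilde{\tb{Y}}$ and $\tb{Y}$ (the paper does this via the matrix integral identity of Lemma~\ref{Matint}). One minor slip: for the complex transition law \eqref{MIMOtran} the Bhattacharyya factor is $\exp\{-\tfrac{1}{4N_w}\mathrm{tr}[\tb{H}(\tb{Y}-\tilde{\tb{Y}})(\tb{Y}-\tilde{\tb{Y}})^{H}\tb{H}^{H}]\}$, not $\tfrac{1}{8N_w}$ --- the $1/8$ constant is the real-Gaussian value.
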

	\begin{proof}
		See Appendix \ref{Gauss}.
	\end{proof}
	In this theorem, we present an achievable JSCC excess distortion exponent in a semantic-aware MIMO communication system. Specifically, we consider a jointly Gaussian distributed source pair, where an $\ell$-length semantic vector $\tb{S}$ is combined with a $q$-length observed vector $\tb{X}$. Furthermore, the quadratic distortion measure and a MIMO system with block fading channel are also considered. The optimal exponent $E_J^{\mathrm{opt}}$ is composed of two parts, namely the source exponent  and the expurgated random coding exponent for MIMO systems. From the achievable bound, the ergodic capacity and cut-off rate of the above semantic-aware MIMO communication system can be obtained, by setting $\rho=0$ or 1. We note that the generalized vector nature complicates the statement of the exponent, which makes the  optimization in Eq. \eqref{MMo} difficult. Hence a simple case is discussed as follows, which enables us to further analyze and reveal some insights on the JSCC scheme design in such a semantic-aware MIMO communication system, for optimal excess distortion exponent.
	
	\begin{corollary}[Excess distortion exponent for semantic-aware MIMO systems in specific case]\label{Cor4}
		Under the same setups in Theorem \ref{Gaussian}, we further assume $\tb{X}\sim\mathcal{N}(\tb{0}_q,\sigma_X^2\tb{I}_q)$,  $\tb{N}\sim\mathcal{N}(\tb{0}_\ell,\sigma_N^2\tb{I}_\ell)$ and $\tb{H}\sim\mathcal{MN}(\tb{0}_{n_R\times n_T},\tb{I}_{n_R},\tb{I}_{n_T})$ (for simplicity we set $n_R\leq n_T$), $\tb{Q}=\frac{\mathcal{P}}{n_T}\tb{I}_{n_T}$ due to the equal power assignment on the transmitting antennas, and denote $\mathrm{SNR}=\frac{\mathcal{P}}{N_w}$, then the following equations hold:
		\begin{enumerate}[(a)]
			\item For the expurgated random coding bound for MIMO channel in Eq. \eqref{nb}, the derivatives are calculated as
			\begin{align}
				\frac{\partial E_\mathrm{ex}\left(\tb{Q}, \rho, \delta, N_{\mathrm{c}}\right)}{\partial \delta}=&2\rho\mathcal{P}-\frac{2\rho n_T\mathrm{SNR} }{1-\delta\text{ } \mathrm{SNR}}+\frac{\mathrm{SNR}}{N_c}\mathrm{tr}\left\{\tb{K}^{-1}(\rho,\delta)\frac{\partial \tb{K}(\rho,\delta)}{\partial\delta}\right\}\label{delta}\\
				\frac{\partial E_\mathrm{ex}\left(\tb{Q}, \rho, \delta, N_{\mathrm{c}}\right)}{\partial \rho} =& 2\delta\mathcal{P}+2n_T\ln(1-\delta\text{ }\mathrm{SNR})-\frac{1}{N_c}\mathrm{tr}\left\{\tb{K}^{-1}(\rho,\delta)\frac{\partial \tb{K}(\rho,\delta)}{\partial\rho}\right\}\label{rho}
			\end{align}
			where $\tb{K}(\rho,\delta)$ is a Hankel matrix with size $n_T\times n_T$ whose $(i,j)$-th entry follows hypergeometric function 
			\[
			(n_T-n_R+i+j-2)!_2F_0\left(n_T-n_R+i+j-1,N_c\rho;-\frac{\mathrm{SNR}}{2(1-\delta\text{ } \mathrm{SNR})\rho}\right)
			\]
			\item Given $tR(\tb{P}_\tb{X},\tb{P}_{\tb{S}|\tb{X}},D_x,D_s)\leq R\leq C(\tb{W}_{\tb{Z}|\tb{Y}})$, the JSCC excess distortion exponent is convex in terms of code rate $R$.
			\item Given the above source-channel pair, and the transmission rate $t$, the optimal achievable code rate $R^\star$ can be formulated by
			\begin{align}
				R^\star = 2t\ln\frac{t\rho^\star+2}{2\min\left\{\frac{\sigma_X^2}{\sigma_X^2\mathrm{tr}\{\tb{h}^T\tb{h}\}+\sigma_N^2}(D_s-\sigma_N^2),\frac{1}{\sigma_X^2}D_x\right\}}
			\end{align}
			where $\rho^\star$ satisfies \[
			\rho^\star=\arg\max _{0 \leq \rho \leq 1}\left\{E_\mathrm{ex}(\tb{Q},\rho,\delta,N_c) -\rho R\right\}
			\] according to Eq.\eqref{delta} and Eq.\eqref{rho}.
		\end{enumerate}
	\end{corollary}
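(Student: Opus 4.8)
The plan is to prove (a), (b), (c) in order, taking Theorem~\ref{Gaussian} and the closed forms \eqref{Gsource}--\eqref{nb} as given and reducing everything to eigenvalue computations via the isotropy assumptions of the corollary. For part~(a), the crux is to evaluate the random-matrix expectation $\mathbb{E}_{\tb{H}}\{\det(\cdots)^{-N_c\rho}\}$ in \eqref{exb}. I would first substitute $\tb{Q}=\frac{\mathcal{P}}{n_T}\tb{I}_{n_T}$ and $\mathrm{SNR}=\mathcal{P}/N_w$, so that $\tb{A}$ in \eqref{nb} and every matrix appearing inside the determinant becomes a polynomial in $\tb{M}\triangleq\tb{H}^{H}\tb{H}$ (equivalently in $\tb{H}\tb{H}^{H}$); since these commute, the determinant diagonalizes, the $\tb{A}$ and $\tb{A}^{-1}$ factors partially cancel, and $\det(\cdots)^{-N_c\rho}$ factors into an $\tb{H}$-independent piece --- a power of $(1-\delta\,\mathrm{SNR})$, which later produces the $(1-\delta\,\mathrm{SNR})$-dependent terms of \eqref{delta}--\eqref{rho} --- times a product over the Wishart eigenvalues $\{\mu_i\}$ of a scalar factor $\big(1+\gamma(\rho,\delta)\mu_i\big)^{-N_c\rho}$ with $\gamma(\rho,\delta)=\frac{\mathrm{SNR}}{2(1-\delta\,\mathrm{SNR})\rho}$. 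Integrating this against the complex-Wishart joint eigenvalue density $\propto\prod_{i<j}(\mu_i-\mu_j)^2\prod_i\mu_i^{\,n_T-n_R}e^{-\mu_i}$, writing the squared Vandermonde as a product of two Vandermonde determinants and applying the Andr\'eief (Gram) identity collapses the multidimensional integral into the single determinant $\det\tb{K}(\rho,\delta)$, whose $(i,j)$-entry is the Laguerre-weighted integral $\int_0^{\infty}\mu^{\,n_T-n_R+i+j-2}e^{-\mu}(1+\gamma\mu)^{-N_c\rho}\,d\mu$; this Kummer-$U$ integral equals $(n_T-n_R+i+j-2)!\,{}_2F_0\big(n_T-n_R+i+j-1,\,N_c\rho;\,-\gamma\big)$, i.e.\ precisely the stated Hankel matrix. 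Finally, writing $E_\mathrm{ex}=2\delta\rho\mathcal{P}-\frac{1}{N_c}\ln\big[(\text{prefactor})\cdot\det\tb{K}(\rho,\delta)\big]$ and differentiating, the term $2\delta\rho\mathcal{P}$ supplies $2\rho\mathcal{P}$ and $2\delta\mathcal{P}$, the prefactor supplies $-\frac{2\rho n_T\mathrm{SNR}}{1-\delta\,\mathrm{SNR}}$ and $2n_T\ln(1-\delta\,\mathrm{SNR})$, and Jacobi's formula $\partial_x\ln\det\tb{K}=\mathrm{tr}(\tb{K}^{-1}\partial_x\tb{K})$ supplies the remaining trace terms, yielding \eqref{delta} and \eqref{rho}.

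For part~(b), write $E_J(\cdot,R,t)=t\,\widetilde{E}^{\mathrm{G}}(R/t)+E^{\mathrm{MIMO}}(R)$ as in \eqref{MMo}. The term $E^{\mathrm{MIMO}}(R)=\sup_{0\le\rho\le1,\,\delta\ge0}\{E_\mathrm{ex}(\tb{Q},\rho,\delta,N_c)-\rho R\}$ is a pointwise supremum of functions affine (of slope $-\rho$) in $R$, hence convex in $R$. The Gaussian source excess-distortion exponent $\widetilde{E}^{\mathrm{G}}$ is convex --- a standard feature of rate-distortion error exponents \cite{Marton_1974,Zhong_Alajaji_Campbell_2007}, and in the isotropic case of the corollary readily checked by parametrizing \eqref{Gsource} through the test covariance (it equals $0$ below the rate-distortion threshold and a convex increasing function of $r$ above it, with a $C^1$ join) --- so $t\,\widetilde{E}^{\mathrm{G}}(R/t)$ is convex in $R$. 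A sum of convex functions is convex, so $E_J(\cdot,R,t)$ is convex in $R$ on $\mathcal{R}$.

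For part~(c), convexity from (b) makes the minimizer of $E_J(\cdot,R,t)$ over $\mathcal{R}=[\,tR(\tb{P}_\tb{X},\tb{P}_{\tb{S}|\tb{X}},D_x,D_s),\,C(\tb{W}_{\tb{Z}|\tb{Y}})\,]$ the interior stationary point (otherwise the nearer endpoint). By Danskin's envelope theorem $\frac{d}{dR}E^{\mathrm{MIMO}}(R)=-\rho^\star$, where $\rho^\star=\arg\max_{0\le\rho\le1}\{E_\mathrm{ex}(\tb{Q},\rho,\delta,N_c)-\rho R\}$ is the maximizing exponent ($\delta$ being fixed through \eqref{delta}--\eqref{rho}), while $\frac{d}{dR}[t\,\widetilde{E}^{\mathrm{G}}(R/t)]=\widetilde{E}^{\mathrm{G}\prime}(R/t)$, so $\frac{d}{dR}E_J=0$ reads $\widetilde{E}^{\mathrm{G}\prime}(R^\star/t)=\rho^\star$. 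To evaluate $\widetilde{E}^{\mathrm{G}\prime}$ I would reduce \eqref{Gsource} to a scalar problem: with $\bm{\Sigma}_X=\sigma_X^2\tb{I}_q$ and $\bm{\Sigma}_N=\sigma_N^2\tb{I}_\ell$, symmetry forces the minimizers to be scaled identities $\bm{\Delta}=\delta_X\tb{I}_q$, $\tb{A}=\delta_X e^{2r/q}\tb{I}_q$, $\tb{B}=\tfrac1\ell\big(D_s-\delta_X\mathrm{tr}\{\tb{h}^{T}\tb{h}\}\big)\tb{I}_\ell$, so the remaining scalar minimization over $\delta_X$ has optimizer $\delta_X=\min\big\{\tfrac{\sigma_X^2}{\sigma_X^2\mathrm{tr}\{\tb{h}^{T}\tb{h}\}+\sigma_N^2}(D_s-\sigma_N^2),\ \tfrac{1}{\sigma_X^2}D_x\big\}$, the $\min$ selecting whichever distortion budget --- the observed one $D_x$, or the remote-source one tied to $D_s$ --- is active. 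Substituting this $\delta_X$, reading off $\widetilde{E}^{\mathrm{G}\prime}$, and solving $\widetilde{E}^{\mathrm{G}\prime}(R^\star/t)=\rho^\star$ for $R^\star$ gives the displayed formula.

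The main obstacle is the matrix reduction in part~(a): carrying out the algebraic cancellation that turns $\det(\tb{Q}\tb{A}(\cdots))^{-N_c\rho}$ into an $\tb{H}$-independent factor times $\prod_i(1+\gamma\mu_i)^{-N_c\rho}$, and then matching the Laguerre-weighted entry integrals to the exact confluent-hypergeometric form ${}_2F_0(n_T-n_R+i+j-1,N_c\rho;\cdot)$ with argument $-\frac{\mathrm{SNR}}{2(1-\delta\,\mathrm{SNR})\rho}$ while correctly tracking the $\tb{H}$-independent prefactor responsible for the $(1-\delta\,\mathrm{SNR})$ contributions. Once (a) and the scalar reduction of \eqref{Gsource} are in hand, (b) and (c) are routine.
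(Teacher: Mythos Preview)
Your proposal is correct and reaches the same conclusions, but it is considerably more explicit than the paper's own argument. For part~(a), the paper does not carry out the Wishart--Andr\'eief reduction you outline; it simply invokes \cite[Cor.~4]{Shin_Win_2008} to write down the closed form
\[
E_\mathrm{ex}\Big(\tfrac{\mathcal{P}}{n_T}\tb{I}_{n_T},\rho,\delta,N_c\Big)=2\delta\rho\mathcal{P}-2\rho n_T\ln(1-\delta\,\mathrm{SNR})-\tfrac{1}{N_c}\ln\frac{\det\tb{K}(\rho,\delta)}{\mathcal{K}},\qquad \mathcal{K}=\prod_{i=1}^{n_T}(n_R-i)!\,(i-1)!,
\]
and then differentiates. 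Your eigenvalue factorization and Andr\'eief/Gram determinant identity is exactly the machinery behind that cited corollary, so you are in effect reconstructing a black-box lemma the paper imports; this buys a self-contained proof at the cost of the extra algebra you flag as the main obstacle. For part~(b), the paper merely asserts that the second partial derivative in $R$ is positive on $\mathcal{R}$; your argument that $E^{\mathrm{MIMO}}(R)$ is a pointwise supremum of affine functions, combined with the standard convexity of the source excess-distortion exponent, is cleaner and avoids any explicit second-derivative computation. For part~(c) the two approaches coincide: the paper maximizes over $\delta,\rho$ and then solves $\partial_R\big[t\widetilde{E}^{\mathrm{G}}(R/t)+E^{\mathrm{MIMO}}(R)\big]=0$, which is your Danskin step followed by the scalar reduction of \eqref{Gsource}; your derivation of the active-constraint minimum $\delta_X$ is precisely the isotropic specialization the paper leaves implicit.
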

	\begin{proof}
		Given $\tb{Q}=\frac{\mathcal{P}}{n_T}\tb{I}_{n_T}$ and Gaussian distributed random matrix $\tb{H}$, (a) is obtained by
		\begin{align}
			E_\mathrm{ex}\left(\frac{\mathcal{P}}{n_T}\tb{I}_{n_T}, \rho, \delta, N_{\mathrm{c}}\right)=2\delta\rho\mathcal{P}-2\rho n_T\ln(1-\delta\text{ }\mathrm{SNR})-\frac{1}{N_c}\ln\frac{\det(\tb{K}(\rho,\delta))}{\mathcal{K}}
		\end{align}
		where $\tb{K}$ is given above and $\mathcal{K}=\prod_{i=1}^{n_T}(n_R-i)!(i-1)!$ according to \cite[Cor~4]{Shin_Win_2008}. This corollary enables us to obtain the expectation in Eq. \eqref{nb} via the computable generalized hypergeometric function $_2F_0(\cdot,\cdot;\cdot)$ (given by \cite[Eq~(3)]{James_1964}). Hence the partial derivatives are stated in Eq. \eqref{delta} and Eq. \eqref{rho}. For (b), the conclusion is also direct since the second partial derivative over code rate $R$ is positive when the code rate lies in the interval $tR(\tb{P}_\tb{X},\tb{P}_{\tb{S}|\tb{X}},D_x,D_s)\leq R\leq C(\tb{W}_{\tb{Z}|\tb{Y}})$. For (c), we obtain the maximization of the excess distortion exponent $E^{\mathrm{MIMO}}\left(R,\tb{W}_{\tb{Z}|\tb{Y}}\right)=E_\mathrm{ex}\left(\tb{Q}, \rho^\star, \delta, N_{\mathrm{c}}\right)-\rho^\star R$ over $\delta$ and $\rho$ successively, and solve 
		\begin{align}
		\frac{\partial \widetilde{E}^\mathrm{G}(\frac{R}{t},P_X,P_{S|X})+E^{\mathrm{MIMO}}\left(R,\tb{W}_{\tb{Z}|\tb{Y}}\right)}{\partial R}=0\notag
	\end{align}
	\end{proof}
	Herein, we analyze the properties of the semantic-ware JSCC excess distortion exponent. Due to the matrix essence, we evaluate the expectations on coefficient matrices $\tb{H}$ via the generalized hypergeometry function. Moreover, the partial derivatives are derived in order to solve the optimization problem on JSCC exponent. Note that the statement of exponent is formulated in the form of an optimization problem over coding rate $R$. The solution explicitly presents the optimal design of JSCC scheme for semantic-aware MIMO systems in error exponent sense. 
	\section{Numerical Results}\label{Sec5}
	
	In this section, theoretical bounds of excess distortion exponent in semantic-aware MIMO communication systems are presented. From the simulations, we verify the convexity and show the key quantities like rate-distortion function and ergodic capacity. Then, we explore the influences of MIMO communication systems on semantic information reconstructions, such as coherence time, exponential correlation coefficient and the numbers of antennas. Finally, we also discuss how the optimal code rate $R^\star$ behaviors in terms of different transmission rate $t$.
	\subsection{Experiments Setups}
	We consider the above semantic-aware MIMO communication system in Corollary. \ref{Cor4}, which consists of a joint Gaussian source pair $(\tb{S}, \tb{X})$ with $\bm{\Sigma}_S=3\tb{I}_\ell$ and $\bm{\Sigma}_X=4\tb{I}_q$. Moreover, we assume the same number of transmit and receive antennas $n_T=n_R$, and the channel state matrix $\tb{H}\tb{H}^H\sim\mathcal{Q}(\tb{I}_{n_T},\tb{G}_T,\tb{G}_R)$ (given by \cite{winters1987capacity}), in which we adopt exponential correlation matrices $\tb{G}_T=\{\alpha_T^{|i-j|}\},\tb{G}_R=\{\alpha_R^{|i-j|}\}$ (Simply assuming $\alpha_T=\alpha_R=\alpha\in[0,1)$) to model the spatial correlation. Besides, the signal-to-noise ratio can be calculated by $\mathrm{SNR}=\frac{\mathcal{P}}{N_w}$.
	\subsection{Convexity of JSCC Exponent over Code Rate $R$ for Semantic-Aware MIMO Systems}
	\begin{figure} 
		\label{Sample}
			\centering
			\includegraphics[width=0.6\textwidth]{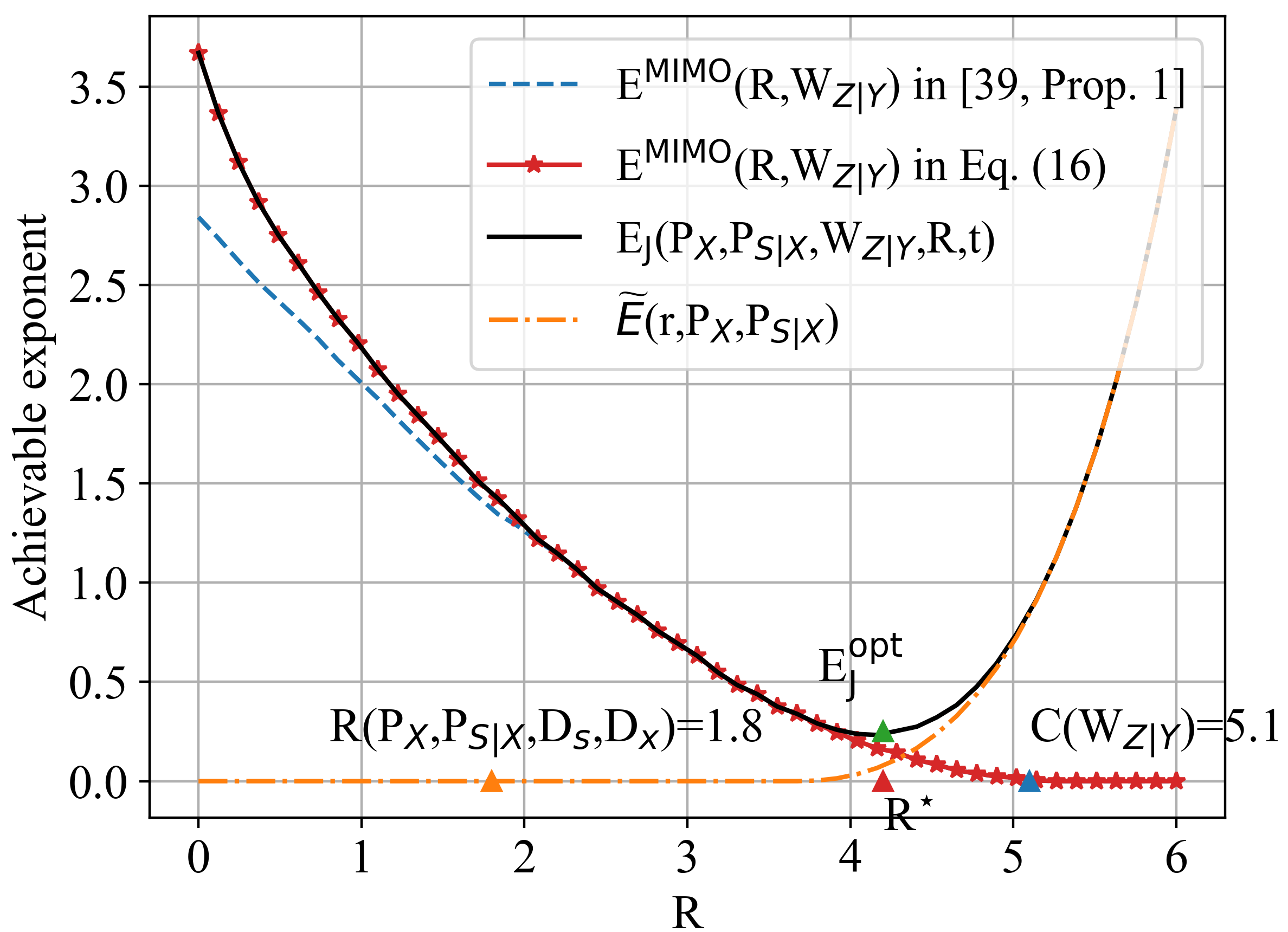}
			\caption{An illustration of the JSCC excess distortion exponent with some key quantities}
	\end{figure}
	In Fig. 3, the JSCC excess distortion exponent for a MIMO system $E_J\left(\tb{P}_\tb{X},\tb{P}_{\tb{S}|\tb{X}},\tb{W}_{\tb{Z}|\tb{Y}},R,t\right)$ is plotted against the code rate $R$. The distortions $D_s=2$, $D_x=1$, coherence time $N_c=1$, $n_T=n_R=3$, correlation coefficient $\alpha=0.3$, SNR$=15$, and the transmission rate $t=2$ symbol/channel use. Note that the rate-distortion function characterizing the semantic information $R(\tb{P}_\tb{X},\tb{P}_{\tb{S}|\tb{X}},D_s,D_x)=1.8$ and the ergodic capacity $C(\tb{W}_{\tb{Z}|\tb{Y}})=5.1$, which is marked in this figure. The solid line represents JSCC exponent function, which consists of source exponent in the dashed dotted line and MIMO channel exponent given in the expurgated bound Eq. \eqref{exb} marked by the star labels. In comparison, we also present the Gallager's random coding bound of MIMO channel \cite{shin2009gallager} in dashed line to verify its suboptimality, since the 'bad' JSCC codewords are expurgated for a better error probability. Due to the convexity of the JSCC exponent, we focus on the minimum of $E_J$, which is labeled by $E_{J}^\mathrm{opt}$ and the optimal JSCC code rate $R^\star$.
	
	
	\subsection{Optimal JSCC Excess Distortion Exponent for Semantic-Aware MIMO Systems}
	In this subsection, we provide plots on the exponential behaviors of the optimal JSCC excess distortion probability for semantic-aware MIMO systems. We investigate how the source and channel key quantities influence the best exponent performance.
	\subsubsection*{Excess Distortion Exponent against Semantic Distortions}	
	\begin{figure}[tbp]
		\centering
		\vspace{-0.35cm}
		\setlength{\abovecaptionskip}{-2pt}
		\subfigure[]{
			\label{Distortion}
			\includegraphics[width=0.45\linewidth]{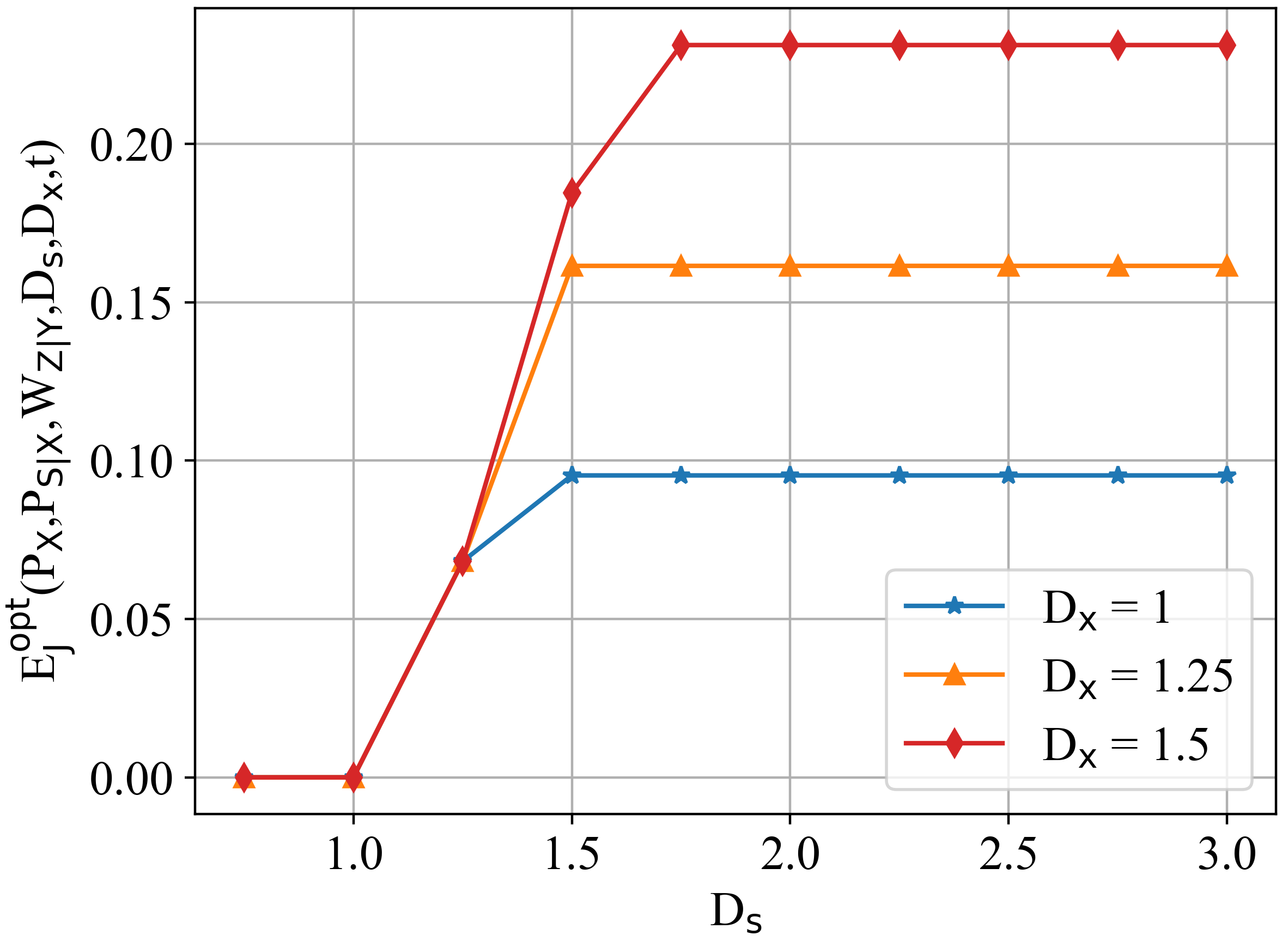}}
		\qquad
		\subfigure[]{
			\label{Attena}
			\includegraphics[width=0.45\linewidth]{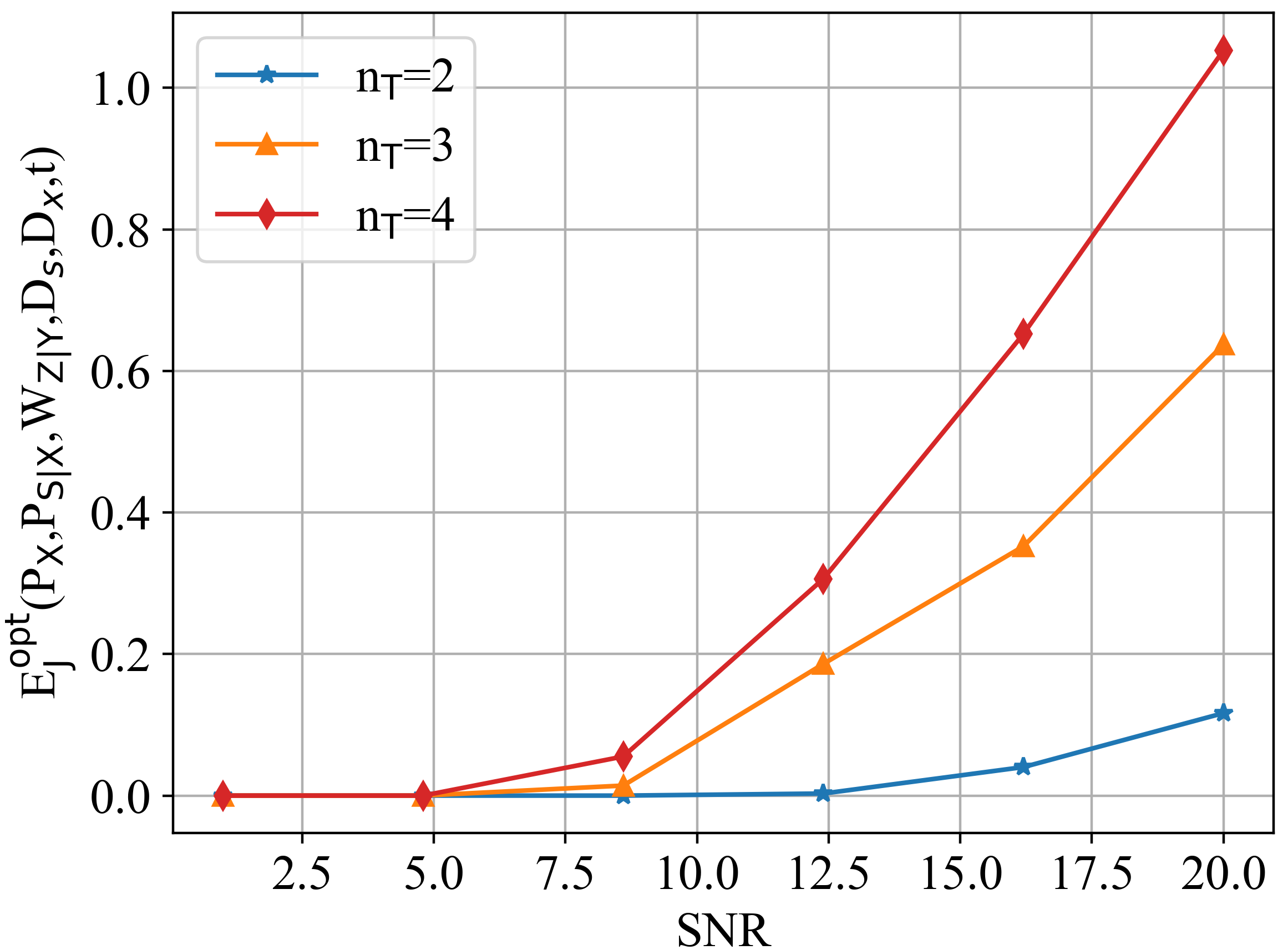}}
		
		\subfigure[]{
			\label{Coherence}
			\includegraphics[width=0.45\linewidth]{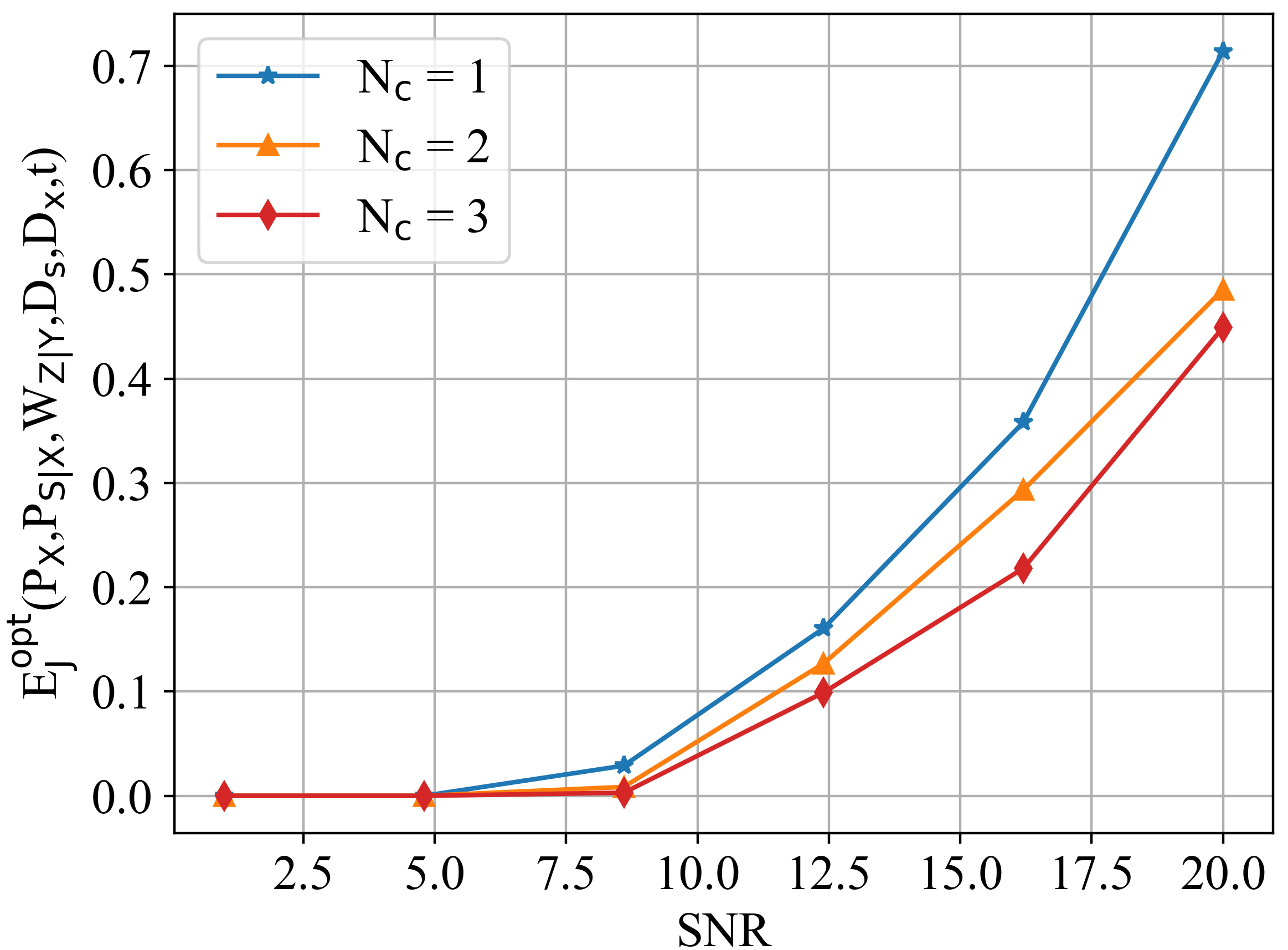}}
		\qquad
		\subfigure[]{
			\label{Coeff}
			\includegraphics[width=0.45\linewidth]{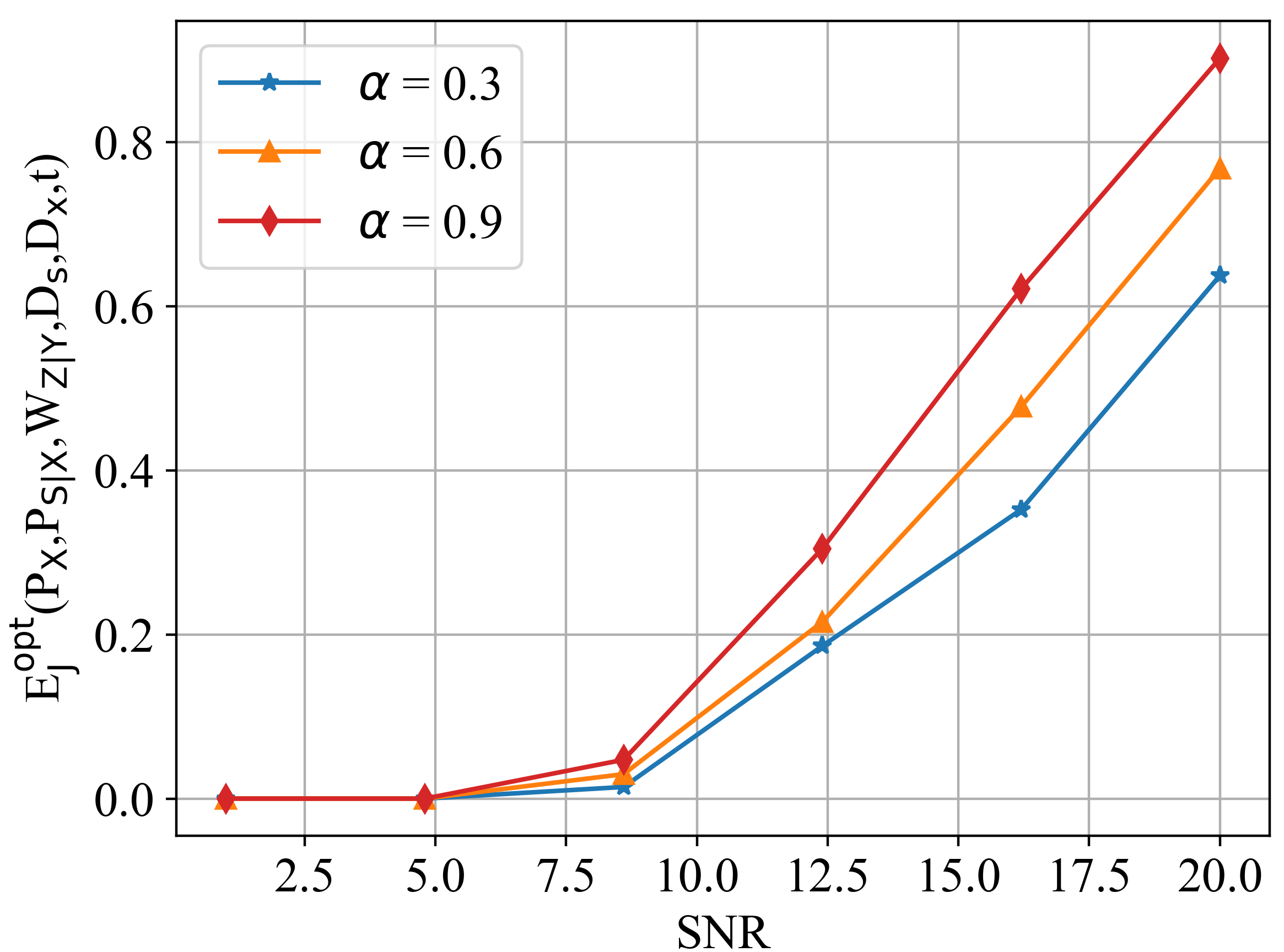}}
		\caption{Excess distortion exponent in terms of: (a) observed distortions, (b) number of antennas, (c) coherence time, (d) correlation coefficient}
	\end{figure}
	Based on the depicted system,
	the excess distortion exponents against semantic distortion are plotted in Fig. \ref{Distortion} in terms of different observed distortions, which intuitively illustrates the tradeoff between $D_s$ and $D_x$. The star, triangle and diamond lines represent the optimal performance with $D_x=1,1.25,1.5$, respectively. Obviously both the increase of $D_s$ and $D_x$ leads to the increase of the exponents, but the curves remain invariant when the semantic distortion becomes inactive, since the observed constraint is more demanding. Under the aforementioned experiment setups, the achievable optimal JSCC excess distortion exponent attains its limit around 0.24 when $D_x=1.5$.
	\subsubsection*{Excess Distortion Exponent against SNR}
We first compare the optimal excess distortion exponent in terms of MIMO systems with different numbers of antennas, namely $n_T=n_R=2,3,4$. From Fig. 4(b), the exponent increases with the number of antennas dramatically. Specifically, under a higher SNR environment, a semantic-aware MIMO system with a $4\times4$ array obtains $E_J^\mathrm{opt}\geq 1$, while a $2\times2$ system only has $1/10$ performance on the exponential probability. This trend demonstrates the compatibility of semantic-aware communication systems and the massive MIMO techniques, with an even larger $n_T$.
	
	In Fig. \ref{Coherence}, we explore how the coherence time $N_c$ in MIMO system affects the exponent. The star, triangle and diamond curves stand for the optimal exponent ranges from 1 to 3, respectively. It can be observed though the longer coherence time results in longer block length, the optimal exponent decreases with $N_c$ at arbitrary SNR. In Fig. \ref{Coeff}, the plot shows the performance in terms of exponential coefficient with $\alpha=0.3,0.6,0.9$, respectively. The optimal exponent increases with $\alpha$, since the larger $\alpha$ means the better channel transmission. 
	\subsection{Achievable Optimal Code Rate $R^\star$ in terms of MIMO Key Quantities}
		\begin{figure}[tbp] 
		\captionsetup[subfigure]{margin=120pt} 
		\begin{minipage}[t]{0.49\linewidth}
			\subfigure[]{
				\label{6a}
				\centering
				\includegraphics[width=1\textwidth]{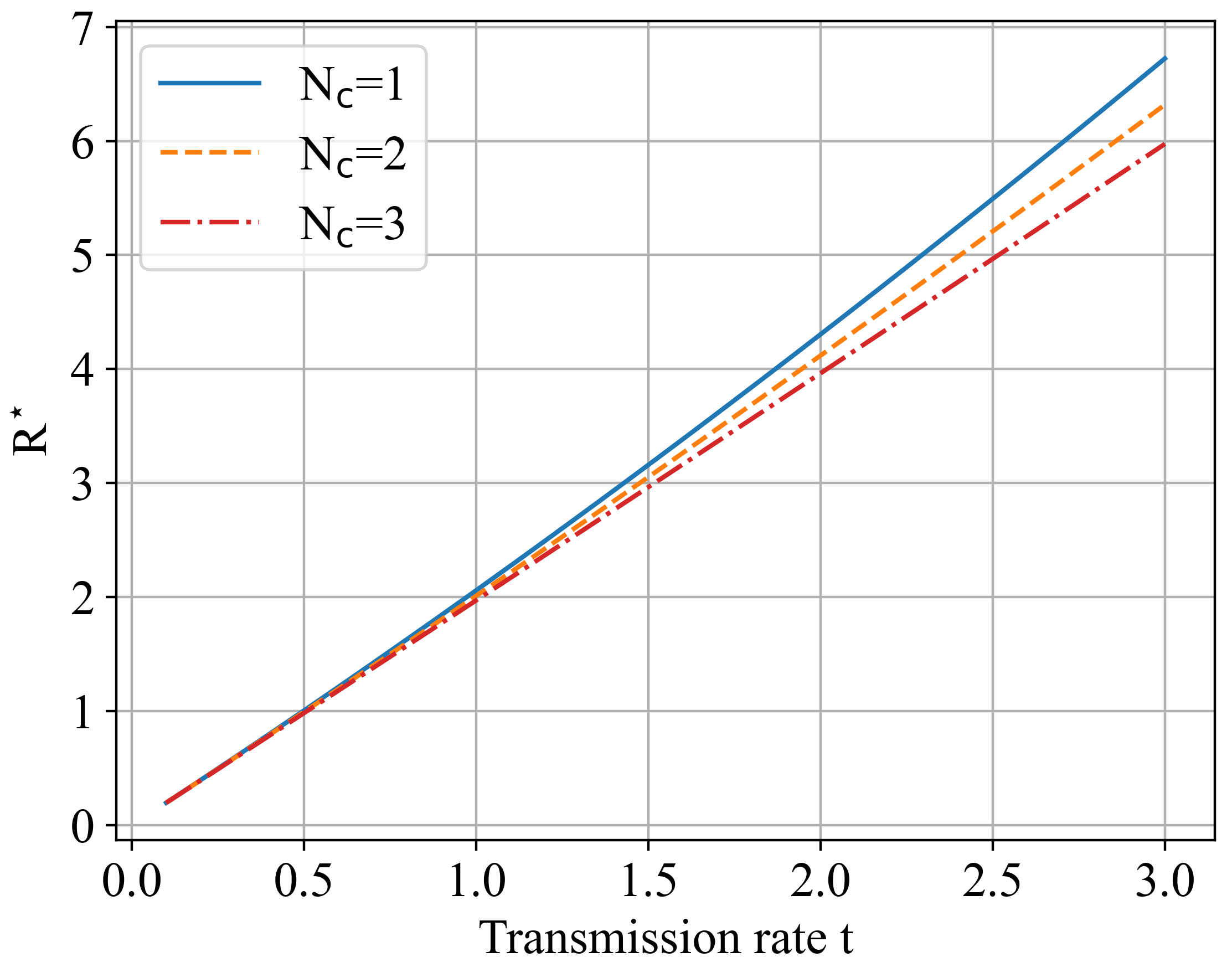}}
		\end{minipage}
		\hspace{0.1in}
		\begin{minipage}[t]{0.49\linewidth}
			\centering
			\subfigure[]{
				\label{6b}
				\includegraphics[width=1\textwidth]{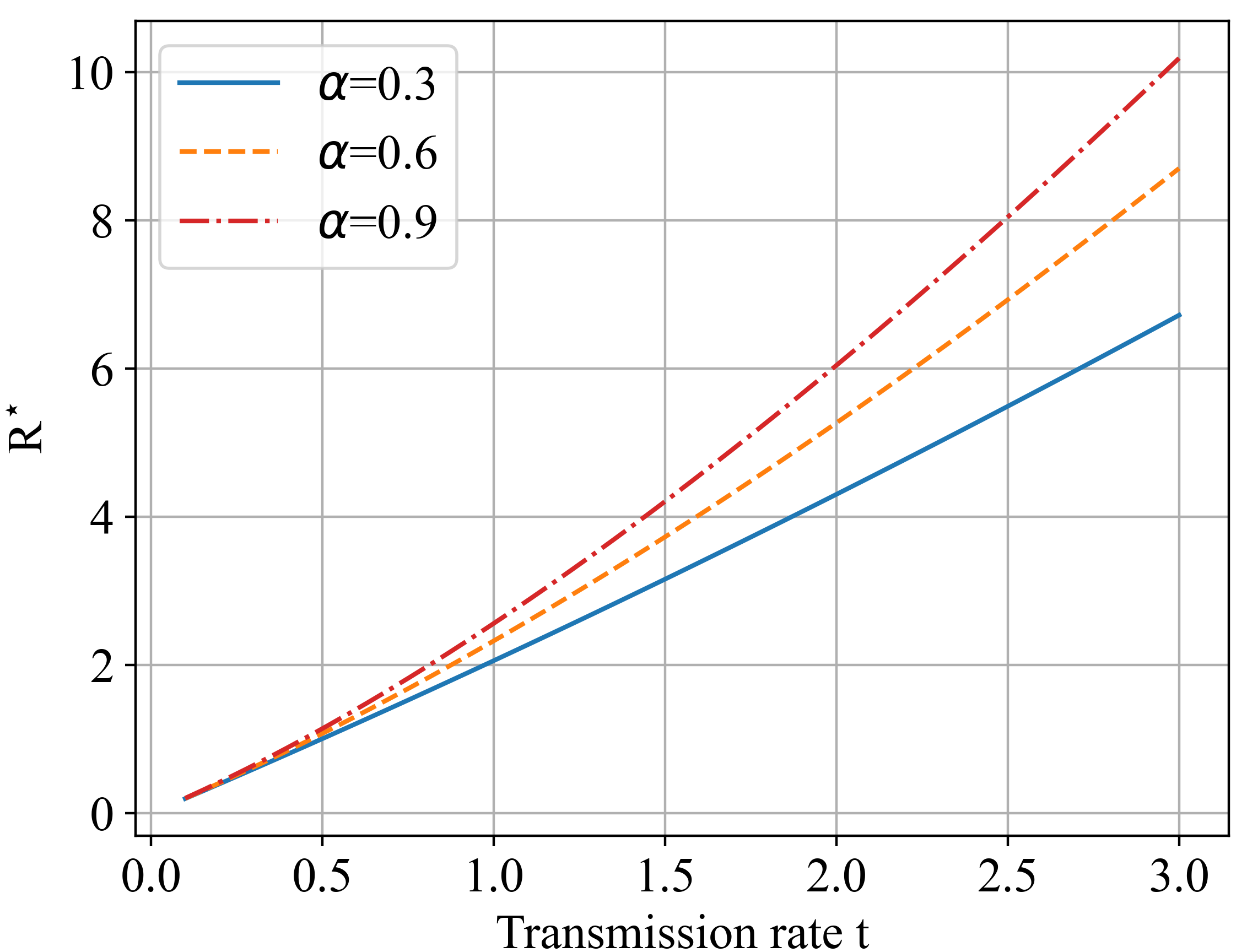}
			}
		\end{minipage}
		\caption{Different JSCC Schemes against Optimal Code Rate $R^\star$ in terms of (a) different coherence time $N_c$ (b) different correlation coefficient}\label{6}
	\end{figure}
	In this subsection, we investigate the optimal JSCC code design for the semantic-aware MIMO systems in error exponent sense. Fixing $D_x = 1.5$, $D_s=2$ and $\alpha=0.3$, the ratio $t=\frac{k}{n}$ is plotted against the optimal code rate $R^\star$ in Fig. \ref{6a}. The transmission rate $t$ increases with the optimal code rate and turns a slightly decrease with coherence time $N_c$. In Fig. \ref{6b}, the exponential correlated coefficient $\alpha$ shows a positive effect on transmission rate $t$. Note that given $t$ and $R^\star$, we obtain the optimal JSCC scheme for the semantic-aware MIMO system.
	
	\section{Conclusion}
	In this paper, we obtained upper and lower bounds of JSCC excess distortion exponent in semantic-aware communication systems. We concluded that the participation of semantic source enlarges the JSCC excess distortion exponent. In a semantic-aware MIMO system, we presented an achievable bound for JSCC excess distortion exponent, which extends the conclusion to a practical communication scenario. As a result, based on the achievability bound, we solved the optimization problem in a simple case and discussed the design of the optimal JSCC scheme in terms of the code rate and transmission rate. Finally, from the numerical results, we show the tradeoff between the two distortions and demonstrated that more antennas and larger correlation efficient lead to a better JSCC excess distortion exponent, while longer coherence time reduces its performance.

	In future works, a tight converse bound for JSCC exponent in semantic-aware MIMO communication system will be considered. Furthermore, the analysis for finite block length case will be also studied for designing practical coding schemes with advantages over the conventional non-semantic-aware communication systems. 


	\appendices
		\section{PROOF OF THEOREM \ref{Maintheorem}}\label{theorem1}
		\begin{lemma}[JSCC Theorem \cite{1998Joint} in Semantic-aware Scenario]\label{JSCCTH}
			Given a memoryless source pair $(S,X)$ and a memoryless channel $W_{Z|Y}$, where $tR(P_X,P_{S|X},D_s,D_x)> C(W_{Z|Y})$, then for any $\left(n,k,D_s,D_x\right)$  JSC code we have
			\begin{align}
				\lim_{n\rightarrow\infty}P_{J}\left(P_{S|X},P_X,W_{Z|Y},n,k\right)=1.
			\end{align}
		\end{lemma}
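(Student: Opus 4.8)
This is the strong converse to the joint source--channel coding theorem, cast in the semantic (remote) source model: as soon as the semantic rate--distortion demand $tR(P_X,P_{S|X},D_s,D_x)$ strictly exceeds the capacity $C(W_{Z|Y})$, the excess distortion probability $P_J$ of \emph{every} $(n,k,D_s,D_x)$ code must tend to one. I would argue by contraposition. Suppose instead that $P_J\le 1-\delta$ for some $\delta>0$ along a subsequence of blocklengths; the plan is to reduce the semantic/remote problem to an ordinary lossy transmission problem for the observable source $X$ and then invoke the known strong converse properties of the DMS and the DMC, combined as in the source--channel separation theorem \cite{1998Joint}, to contradict $tR(P_X,P_{S|X},D_s,D_x)>C(W_{Z|Y})$.

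The reduction has two steps. First, because $S^k-X^k-Y^n-Z^n$ is a Markov chain, conditioning on $X^k=x^k$ makes $S_1,\dots,S_k$ independent with $S_i\sim P_{S|X}(\cdot\,|x_i)$; since $d_S$ is bounded, a Chebyshev estimate gives $\mathbb{P}\{|d_S^k(S^k,\hat s^k)-\hat d_S^k(x^k,\hat s^k)|>\eta\mid X^k=x^k,Z^n=z^n\}\to 0$ uniformly in $(x^k,z^n)$, where $\hat s^k=\psi_S^k(z^n)$, $\hat d_S^k(x^k,\hat s^k)=\tfrac1k\sum_i\hat d_S(x_i,\hat s_i)$ and $\hat d_S(x,\hat s)=\mathbb{E}[d_S(S,\hat s)\,|\,X=x]$. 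Hence, up to an asymptotically negligible change of probability, the semantic excess event $\{d_S^k(s^k,\hat s^k)>D_s\}$ may be replaced by the ``reduced'' event $\{\hat d_S^k(x^k,\hat s^k)>D_s+\eta\}$, which is a function of $(x^k,z^n)$ only (the observable part $\{d_X^k(x^k,\hat x^k)>D_x\}$ already is). Second, the resulting problem is an ordinary lossy JSCC problem for the direct source $X$ under the pair of distortion measures $\hat d_S,d_X$ with thresholds $D_s+\eta,D_x$, whose single-letter rate--distortion function is exactly $R(P_X,P_{S|X},D_s,D_x)$ by \cite[Thm~1]{Liu_Zhang_Poor_2021}; by continuity of $R$ in $(D_s,D_x)$ the relaxed demand still exceeds capacity for $\eta$ small.

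It then remains to invoke the strong converse. For codes with vanishing excess probability the data-processing chain $nC(W_{Z|Y})\ge I(Y^n;Z^n)\ge I(X^k;Z^n)\ge I(X^k;\hat S^k\hat X^k)\ge k\,R(P_X,P_{S|X},D_s+\eta,D_x+\eta)$, via single-letterization and convexity of $R$ followed by $\eta\downarrow0$, already contradicts the hypothesis; but the lemma asserts $P_J\to1$ for \emph{every} code. To reach this I would use the blowing-up lemma on the channel output space --- equivalently, the strong-converse-property machinery of \cite{1998Joint}: the DMS has the strong converse property for lossy reproduction (a type-counting fact, cf. \cite{Csiszar_1982}), and the DMC has Wolfowitz's strong converse, so a code whose reduced good event had probability $\ge\delta>0$ for infinitely many $n$ would effectively convey $\exp\{k(R-o(1))\}$ messages over the channel at rate $t(R-o(1))>C(W_{Z|Y})$ with non-vanishing success probability, which is impossible. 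Hence no such subsequence exists and $P_J\to1$.

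The genuinely delicate point is this last upgrade from ``bounded away from zero'' to ``one'': unlike the expected-distortion weak converse, it needs the blowing-up argument, and one must check that the remote structure does not spoil it. Because the semantic decoder $\psi_S^k$ is discontinuous in $z^n$, the blow-up must be used to inflate the channel output sets (so as to apply the channel strong converse), not to preserve a particular reconstruction; and the concentration reduction of the second paragraph has to be applied \emph{after} the blow-up, so that perturbing $o(n)$ coordinates of $z^n$ only enlarges the effective distortion thresholds by $o(1)$, for which the continuity of $R$ is used once more. Everything else --- data processing, convexity and continuity of the semantic rate--distortion function, and the classical strong converse properties --- is routine.
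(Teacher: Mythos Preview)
The paper does not supply its own proof of this lemma: it is stated as an adaptation of the classical JSCC strong converse \cite{1998Joint} to the semantic setting and is used only to restrict attention to rates in $[tR(P_X,P_{S|X},D_s,D_x),\,C(W_{Z|Y})]$ in the proof of Theorem~\ref{Maintheorem}. So your proposal is not competing with an existing argument; it is filling in a step the paper leaves implicit.

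Your reduction is the right idea and is essentially the whole content of the adaptation. Because the semantic rate--distortion function $R(P_X,P_{S|X},D_s,D_x)$ is already, by \cite[Thm~1]{Liu_Zhang_Poor_2021}, a rate--distortion function of the \emph{observable} source $X$ under the derived measure $\hat d_S$ together with $d_X$, the only gap between the semantic problem and the classical one is that the excess event is written with $d_S^k(s^k,\hat s^k)$ rather than $\hat d_S^k(x^k,\hat s^k)$. Your Chebyshev concentration argument (conditioning on $X^k=x^k$, $Z^n=z^n$, using boundedness of $d_S$ on finite alphabets) closes exactly that gap, and continuity of $R$ in $D_s$ on the interior of the achievable region absorbs the $\eta$-slack. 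After this reduction you may invoke the classical lossy JSCC strong converse for a DMS over a DMC directly.

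The one place where the outline wobbles is the final paragraph. You correctly observe that blowing up in $\mathcal Z^n$ does not preserve the decoder output, but the subsequent claim that ``perturbing $o(n)$ coordinates of $z^n$ only enlarges the effective distortion thresholds by $o(1)$'' is not right as written: $\psi_S^k$ and $\psi_X^k$ are arbitrary maps, so an $o(n)$ change in $z^n$ can change $(\hat s^k,\hat x^k)$ arbitrarily and no distortion slack is inherited. Fortunately this paragraph is unnecessary. Once the concentration reduction is done (and note that it depends only on the randomness of $S^k$ given $X^k$, hence commutes with whatever you do on the channel side), the problem is a standard lossy JSCC for the DMS $X$ with two bounded distortion measures, and the strong converse of \cite{1998Joint} applies as a black box. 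If you insist on sketching that black box, the clean route is: extract from the JSCC a lossy source code for $X$ of rate at most $C/t+o(1)$ (Wolfowitz's channel strong converse plus a list/counting argument), then invoke Marton's lossy source-coding strong converse; the blowing-up lemma, if used at all, lives inside the channel step and never touches the decoder outputs.
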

		\begin{lemma}[c.f.{\cite[Theorem~2]{Shannon_1967}} {\cite[p.~175]{Csiszar_1981}}]\label{lemma2}
			We assume a channel $W$ of capacity $C(W)$ and an $n$-length list code $(\varphi,\psi)$ has a range of decoded sequences of cadinality $l$, which we call the list size.  An erroneous event of list codes occurs when a true message is not on the decoding list. Let $\underline{p_e}(n,R,L)$ denote the minimal error probability, and $\overline{p_e}(n,R,L)$ be the maximal error probability for such an $n$-length list code with rate $R$ and list size $L$. Then for positive $\tilde{\epsilon}_1(n)\rightarrow0$ and $\tilde{\epsilon}_2(n)\rightarrow0$ as $n\rightarrow\infty$, and any $R>C(W)+L$ where $L=\frac{1}{n}\log l$, we have
			\begin{align}
				&\underline{p_e}(n,R,L)\geq\exp\left\{-n\left(E_{\mathrm{sp}}(R-L,W)+\tilde{\epsilon}_1(n)\right)\right\}\label{sp}.\\	&\overline{p_e}(n,R,L)\leq\exp\left\{-n\left(E_{\mathrm{ex}}(R-L,W)-\tilde{\epsilon}_2(n)\right)\right\}\label{rc}.
			\end{align}
		\end{lemma}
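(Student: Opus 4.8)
The plan is to read both inequalities as the list-decoding analogues of the classical sphere-packing and expurgated channel-coding bounds, and to prove each by adapting the corresponding type-based argument (Csisz\'ar--K\"orner / Shannon--Gallager--Berlekamp) with a single extra combinatorial ingredient: a decoding list of size $l=e^{nL}$ relaxes the decoder exactly the way a rate reduction from $R$ to $R-L$ does. Concretely, writing $\mathcal{L}(z^n)\subseteq\{1,\dots,M\}$ for the decoded list with $|\mathcal{L}(z^n)|\le l$ and $D_m=\{z^n:m\in\mathcal{L}(z^n)\}$ for the decoding region of message $m$, the one structural fact I would exploit throughout is the $l$-fold covering bound $\sum_{m}\mathbf{1}[z^n\in D_m]\le l$, valid for every $z^n$. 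This replaces the disjointness $\sum_m\mathbf{1}[z^n\in D_m]\le 1$ of ordinary decoding and is the precise source of the shift $R\mapsto R-L$.

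For the sphere-packing lower bound \eqref{sp}, I would first restrict to a constant-composition subcode: by pigeonholing over the polynomially many types, some type $P_Y$ is shared by at least $M/(n+1)^{|\mathcal{Y}|}$ codewords, so passing to this subcode lowers the rate only by $o(1)$, a loss absorbed into $\tilde\epsilon_1(n)$. On this subcode I would run the Haroutunian form of the sphere-packing argument: for any test channel $V$ with $I_{P_Y\times V}(Y;Z)\le R-L$, the $V$-conditional typical shells $T_V(y_m^n)$ are exponentially large, and the $l$-fold covering bound forces their aggregate $V$-measure inside the decoding regions to fall short, so at least one codeword has most of its $V$-shell \emph{outside} its own region $D_m$. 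A change of measure from $V$ to the true channel $W$ then lower-bounds the $W$-error probability of that codeword by $\exp\{-n[D(V||W|P_Y)+o(1)]\}$; minimizing over admissible $V$ and maximizing over $P_Y$ reproduces $E_{\mathrm{sp}}(R-L,W)$, which is exactly \eqref{sp}.

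For the expurgated upper bound \eqref{rc}, I would use random constant-composition coding of type $P_Y$ with $M=e^{nR}$ codewords and a maximum-likelihood list decoder. A list error for message $m$ requires that at least $l$ competing codewords outscore $y_m^n$ on the received $z^n$. Summing over joint types $P_{Y\widetilde Y}$ with $P_{\widetilde Y}=P_Y$, the expected number of competitors at joint type is $\doteq e^{n(R-I_{P_{Y\widetilde Y}}(Y;\widetilde Y))}$ and each confusable pair contributes the Bhattacharyya cost $\exp\{-n\,\mathbb{E}d_{W}(Y,\widetilde Y)\}$; a moment (Markov) argument shows that tolerating $l$ such competitors before declaring an error effectively replaces $M$ by $M/l=e^{n(R-L)}$. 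Expurgating the worst half of the codewords upgrades the bound from average to \emph{maximal} error probability at the cost of a factor two (negligible in the exponent), and the resulting minimization over $P_{Y\widetilde Y}$ together with maximization over $P_Y$ yields $E_{\mathrm{ex}}(R-L,W)$, establishing \eqref{rc}.

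The main obstacle is the converse direction \eqref{sp}: the classical sphere-packing bound is already delicate, and its list version requires the $l$-fold covering slack to be threaded through the type-counting so that the rate shift $R\mapsto R-L$ comes out \emph{exact} rather than merely asymptotic, while every polynomial loss — subcode restriction, type granularity, and shell-size fluctuations — is collected cleanly into the vanishing term $\tilde\epsilon_1(n)$. The achievability bound \eqref{rc} is comparatively routine once the ``list size equals rate credit'' correspondence is in place, so the bulk of the care should go into the counting that drives the sphere-packing half.
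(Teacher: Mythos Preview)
The paper does not actually prove Lemma~\ref{lemma2}: it is stated with a ``c.f.'' citation to \cite{Shannon_1967} and \cite{Csiszar_1981} and is used as a black box in the proof of Theorem~\ref{Maintheorem}, with only the one-sentence gloss that ``Lemma~\ref{lemma2} is a channel coding theorem based on list codes, which characterizes the error probability via sphere-packing bound and expurgated bound.'' So there is no proof in the paper to compare your proposal against.

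That said, your sketch is the standard route taken in the cited references. The $l$-fold covering bound $\sum_m \mathbf{1}[z^n\in D_m]\le l$ replacing disjointness is exactly the mechanism that produces the rate shift $R\mapsto R-L$ in Shannon--Gallager--Berlekamp, and the constant-composition restriction followed by a change of measure from an auxiliary $V$ to $W$ is the Csisz\'ar--K\"orner sphere-packing argument. Your expurgated half is likewise the classical Gallager expurgation adapted to lists. In short: your plan is correct and aligns with the literature the paper cites, but you are supplying a proof where the paper simply invokes the result.
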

		Lemma \ref{lemma2} is a channel coding theorem based on list codes, which characterizes the error probability via sphere-packing bound and expurgated bound. Note that expurgated bound is a refined version of random coding bound, which drops the bad codewords beyond the Bhattacharya distance \cite[Chap~7]{Gallager_1968}.

		The outline of the proof is the following. We start by presenting a strong converse for JSCC coding theorem under semantic-aware communications. To obtain the excess distortion exponent, it is reasonable to model the non-trivial source-channel pairs of $tR(P_X,P_{S|X},D_s,D_x)\leq C(W_{Z|Y})$, which is given in Lemma \ref{JSCCTH}. Here, we focus on investigating the optimal JSCC scheme of code rate within this interval. Then, as mentioned above, the excess distortion probability, which refers to the ratio of over-distorted sequences to overall sequences, can be bounded by computing the sizes of typical sets of sources and channel. Given the source pair $(s^k,x^k)$, the excess distortion probability from the over-distorted codewords is bounded in this appendix via Lemma \ref{lemma2}. Moreover, the average numbers of the corrupted sequences from the semantic and observable sources are stated as well, by investigating the typical sequences $x^k$ and the conditional typical sequences $s^k$. Finally, by combining these results from source with channel parts and minimizing the sum in terms of code rate $R$, we obtain final results.
		
		According to Lemma \ref{JSCCTH}, we investigate the excess distortion performance of a group of JSCC schemes with code rate $tR(P_X,P_{S|X},D_s,D_x)\leq R\leq C(W_{Z|Y})$.  Now we recall the overall excess distortion probability defined in Eq. \eqref{statement}, and rewrite
		\begin{align}
			\mathbb{P}\left\{\mathcal{E}\right\}
			=&\sum_{x^k\in\mathcal{X}^k}P_{X^k}(x^k)\sum_{s^k\in\mathcal{S}^k}P_{S^k|X^k}(s^k|x^k)p_c(s^k,x^k)\label{overall},
		\end{align}
		where we use $p_c(s^k,x^k)\triangleq\sum_{z^n\in\mathcal{E}(s^k,x^k)}P_{Z^n|Y^n}(z^n|\varphi^n(x^k))$.
		For the excess distortion probability $p_c(s^k,x^k)$, let $\mathcal{T}_Q$ denote the typical set of sequences $x^k\in\mathcal{X}^k$ of type $Q_X$,  $\mathcal{T}_U$ be the joint typical set of sequences $(s^k,x^k)\in\mathcal{S}^k\times\mathcal{X}^k$, and $\mathcal{T}_{U}(x^k)=\left\{s^k:(s^k,x^k)\in\mathcal{T}_U\right\}$ is the conditional typical set of $s^k$ for a given $x^k$, in which the conditional empirical distribution is $U_{S|X}$. Moreover, the conditional typical set $\mathcal{T}_V(z^n)=\{(s^k,x^k):(s^k,x^k,z^k)\in\mathcal{T}_V\}$ based on the joint typical set $\mathcal{T}_V$ composed of sequence tuple $(s^k,x^k,z^k)$. Note that the size of the conditional typical set is 
		\begin{align}
			|\mathcal{T}_V(z^n)|
			\leq&\exp\left\{kH(S^\star,X^\star|Z^\star)\right\}
			=\exp\left\{k\left(H(S^\star,X^\star)-I(S^\star, X^\star;Z^\star)\right)\right\}\notag\\
			\leq&\exp\left\{k\left(H(Q_X,U_{S|X})-R(Q_X,U_{S|X},D_s,D_x)\right)\right\}\notag,
		\end{align} 
		where $S^\star$, $X^\star$ and $Z^\star$ are three arbitrary auxiliary random variables characterizing the joint distribution $P_{S^\star X^\star Z^\star}$, which is a possible joint type of sequences $x^k\in\mathcal{T}_{Q}$, $s^k\in\mathcal{T}_U(x^k)$ and $z^n\in\mathcal{Z}^n$ within the distortion constraints. Next, the number of all possible joint types is upper bounded by $(k+1)^{|\mathcal{S}||\mathcal{X}||\mathcal{Z}|}$ via the type counting lemma. Hence the list size can be bounded by
		\begin{align}
			l&\leq(k+1)^{|\mathcal{S}||\mathcal{X}||\mathcal{Z}|}\exp\left\{k\left(H(Q_X,U_{S|X})\right.\right.
			\left.\left.-R(Q_X,U_{S|X},D_s,D_x)\right)\right\}\notag\\
			&=\exp\left\{k\left(H(Q_X,U_{S|X}\right)\right.\left.\left.-R(Q_X,U_{S|X},D_s,D_x)+\hat{\epsilon}_1(k)\right)\right\},\notag
		\end{align}
		where 
		$
		k = nt$,  $\hat{\epsilon}_1(k)=\frac{1}{k}\log(k+1)^{|\mathcal{S}||\mathcal{X}||\mathcal{Z}|}.
		$
		Note that 
		\begin{align}
			\lim_{k\rightarrow\infty}\hat{\epsilon}_1(k)
			= \lim_{k\rightarrow\infty}\frac{1}{k}\log(k+1)^{|\mathcal{S}||\mathcal{X}||\mathcal{Z}|}
			\overset{(a)}{=}|\mathcal{S}||\mathcal{X}||\mathcal{Z}| \lim_{k\rightarrow\infty}\frac{1}{k}\log(k+1)
			=0,\label{ep}
		\end{align}
		where equality (a) is because the product of alphabet cardinalities, $|\mathcal{S}||\mathcal{X}||\mathcal{Z}|$, is a finite constant. Thus, the parameter $L$ is
		\begin{align}
			L&=\frac{1}{n}\log l\leq t\left(H(Q_X,U_{S|X})-R(Q_X,U_{S|X},D_s,D_x)+\hat{\epsilon}_1(k)\right)\label{L}.
		\end{align}Since the size of the message set  satisfies
		\begin{align}
			\hspace{-0.22cm}\exp\{nR\}\leq\exp\{kH(P_{S,X})\}=\exp\{ntH(Q_X,U_{S|X})\}\label{R},
		\end{align}
		by substituting Eq. \eqref{L} and Eq. \eqref{R} into Eq. \eqref{sp}, we have
		\begin{align}
			p_{c}(s^k,x^k)
			&=\exp\left\{-n\left(E_{\mathrm{sp}}(tR(Q_X,U_{S|X},D_s,D_x)-t\hat{\epsilon}_1(k),W_{Z|Y})+\tilde{\epsilon}_1(n)\right)\right\}\notag\\
			&\overset{(b)}{\geq}\exp\left\{-n\left(E_{\mathrm{sp}}(tR(Q_X,U_{S|X},D_s,D_x),W_{Z|Y})+\epsilon_1(n)\right)\right\}\label{final},
		\end{align} 
		where the inequality $(b)$ holds since  $E_{\mathrm{sp}}(R,W)$ is a non-increasing function in $R$, and $\epsilon_1(n)=t\hat{\epsilon}_1(n)+\tilde{\epsilon}_1(n)\rightarrow0$ as $n\rightarrow\infty$ .
		
		Now given an observed sequence $x^k$, we define 
		$
		p_a(x^k)=\sum_{s^k\in\mathcal{S}^k}P_{S^k|X^k}(s^k|x^k)p_c(s^k,x^k),
		$ and
		\begin{align}
			p_a(x^k)=&		\sum_{U_{S|X}\in\mathcal{U}}\sum_{s^k\in\mathcal{T}_U(x^k)}P_{S^k|X^k}(s^k|x^k)p_c(s^k,x^k)\notag\\
			\overset{(c)}{=}&\sum_{U_{S|X}\in\mathcal{U}}\sum_{s^k\in\mathcal{T}_U(x^k)}\prod_{(a,b)\in(\mathcal{S}\times\mathcal{X})}P_{S|X}(a|b)^{N((a,b)|(s^k,x^k))}p_c(s^k,x^k)\notag\\
			=&\sum_{U_{S|X}\in\mathcal{U}}\left\vert \mathcal{T}_U\left(x^k\right)\right\vert
			\exp\left\{-k\mathbb{E}_{Q_X\times U_{S|X}}\left[-\log P_{S|X}(S|X)\right]\right\}p_c(s^k,x^k)\notag\\
			\overset{(d)}{=}&\sum_{U_{S|X}\in\mathcal{U}}\left\vert \mathcal{T}_U\left(x^k\right)\right\vert\exp\left\{-kH(U_{S|X}\vert Q_{X})\right\}\exp\left\{-kD\left(\left.U_{S|X}\right\Vert \left.P_{S|X}\right\vert Q_X\right)\right\}p_c(s^k,x^k)\notag\\
			\overset{(e)}{\geq}&\sum_{U_{S|X}\in\mathcal{U}}(k+1)^{-|\mathcal{S}||\mathcal{X}|}\exp\left\{-kD\left(\left.U_{S|X}\right\Vert \left.P_{S|X}\right\vert Q_X\right)\right\}p_c(s^k,x^k)\label{c}.
		\end{align}
		where $\mathcal{Q}$ denotes the set of all types $Q_X$ and $\mathcal{U}$ denotes the set of conditional types $U_{S|X}$. In $(c)$ we apply the empirical count function $N((a,b)|(s^k,x^k))$ on the conditional probability, in $(d)$ we use the definition on the conditional divergence defined in Eq. \eqref{Cond}, and in $(e)$ the following result is used (\cite[Lemma~2.3]{Csiszar_1981}):
		\begin{align}
			(k+1)^{-|\mathcal{S}||\mathcal{X}|}\leq\left\vert\mathcal{T}_U\left(x^k\right)\right\vert\exp\left\{-nH(U_{S|X}\vert Q_{X})\right\}\leq 1.\notag
		\end{align}
		To characterize the possible conditional types $U_{S|X}$, given the observable sequences $x^k\in\mathcal{T}_Q$, we use the conclusion that the code rate is upper bounded by rate-distortion function $tR(Q_X,U_{S|X},D_s,D_x)\geq R$. Thus all possible $U_{S|X}$ should be restricted in the following set:
		\begin{align}
			\mathcal{U}\triangleq\left\{U_{S|X}\in\mathcal{C}(\mathcal{X}\rightarrow\mathcal{S}):R(Q_X,U_{S|X},D_s,D_x)\geq r\right\}.\label{u}
		\end{align}
		Following Eq. \eqref{overall}, the overall excess distortion probability can be stated in Eq. \eqref{over},
		\begin{align}
			\mathbb{P}\left\{\mathcal{E}\right\}
			=&\sum_{Q_X\in\mathcal{Q}}\sum_{x^k\in\mathcal{T}_{Q}}P_{X^k}(x^k)p_a(x^k)\notag\\
			\geq&\exp\left\{-k\left(\min_{Q_X\in\mathcal{Q}}\left(D(Q_X||P_X)+\min_{U_{S|X}\in\mathcal{U}}D(U||P_{S|X}|Q_X)+\epsilon_3(k)\right)\right)\right\}p_c(s^k,x^k)\notag\\
			=&\exp\left\{-k\left(\widetilde{E}(r,P_X,P_{S|X})+\epsilon_3(k)\right)\right\}\exp\left\{-n\left(E_{\mathrm{sp}}(R,W_{Z|Y})+\epsilon_{1}(n)\right)\right\}\label{over},
		\end{align}
		where $\epsilon_{1}(n)$ is given in Eq. \eqref{final} and
		\begin{align}
			\epsilon_3(k) &= -\frac{1}{k}\log\left(|\mathcal{Q}||\mathcal{U}|(k+1)^{-|\mathcal{S}||\mathcal{X}|-|\mathcal{X}|}\right)\overset{(f)}{\leq}\frac{1}{k}\log\left((k+1)^{|\mathcal{X}|}\right).\notag
		\end{align}
		Inequality $(f)$ holds since the number of all possible joint types of $x^k$ and $s^k$ is upper bounded by\footnote{More specifically, $\mathcal{U}\subseteq\mathcal{U}^\star=\{U_{S|X}:U_{S|X}\in\mathcal{C}(\mathcal{X}\rightarrow\mathcal{S})\}$, and the number of all joint types is bounded by type counting lemma \cite[Lemma~2.2]{Csiszar_1981} as $|\mathcal{Q}||\mathcal{U}|\leq|\mathcal{Q}||\mathcal{U}^\star|\leq(k+1)^{|S||\mathcal{X}|}$.} $|\mathcal{Q}||\mathcal{U}|\leq(k+1)^{|\mathcal{S}||\mathcal{X}|}$, hence $\epsilon_3(k)\rightarrow0$ as $k\rightarrow\infty$. 
		Combining Eq. \eqref{over} with the definition of $\widetilde{E}(r,P_X,P_{S|X})$ in Eq. \eqref{sourcecoding},
		we state the upper bound on the exact excess distortion exponent as
		\begin{align}
			\widetilde{E}_J(R)\triangleq\liminf_{n\rightarrow\infty}\left[-\frac{1}{n}\log \mathbb{P}\left\{\mathcal{E}\right\}\right]\leq \frac{k}{n}\widetilde{E}(r,P_X,P_{S|X})+E_{\mathrm{sp}}(R,W_{Z|Y})\label{1}.
		\end{align} 
		With the achievability Eq. \eqref{rc} in Lemma \ref{lemma2}, we can get the lower bound similarly on $\widetilde{E}_J(R)$
		\begin{align}
			\widetilde{E}_J(R)\geq \frac{k}{n}\widetilde{E}(r,P_X,P_{S|X})+E_{\mathrm{ex}}(R,W_{Z|Y})\label{4}.
		\end{align} 
		By substituting $t=\frac{k}{n}$ and $r=\frac{R}{t}$ in Eq. \eqref{1} and Eq. \eqref{4}, respectively, we finally obtain upper and lower bounds of excess distortion exponent in Eq. \eqref{Upp} and Eq. \eqref{Low}, respectively. Note that the excess distortion probability is stated as a function of $n,k$ but the exponent of the optimal JSCC only concerns the transmission rate $t$ symbol/channel use.
		
		\section{PROOF OF Theorem \ref{Gaussian}}\label{Gauss}
		\begin{lemma}\cite[Appendix~B]{Taricco08}\label{Matint}
			For every pair of Hermitian positive definite matrices $\mathbf{A} \in \mathbb{C}^{m \times m}, \mathbf{C} \in \mathbb{C}^{n \times n}$, and denote $\mathrm{etr}(\cdot)=\exp\{\mathrm{tr}(\cdot)\}$, then for any matrices $\mathbf{B}, \mathbf{D} \in \mathbb{C}^{m \times n}$, we have:
			$$
			\begin{aligned}
				\int_{\mathbb{C}^{n \times m}} \operatorname{etr}\left(-\pi\left(\mathbf{A} \mathbf{U}^{\mathbf{H}} \mathbf{C} \mathbf{U}+\mathbf{B}^{\mathbf{H}} \mathbf{U}+\mathbf{U}^{\mathbf{H}} \mathbf{D}\right)\right) d \mathbf{U} &=\operatorname{det}\left(\mathbf{A}^{\top} \otimes \mathbf{C}\right)^{-1} \operatorname{etr}\left(\pi \mathbf{A}^{-1} \mathbf{B C}^{-1} \mathbf{D}^{\mathbf{H}}\right) \\
				&=\operatorname{det}(\mathbf{A})^{-n} \operatorname{det}(\mathbf{C})^{-m} \operatorname{etr}\left(\pi \mathbf{A}^{-1} \mathbf{B C}^{-1} \mathbf{D}^{\mathbf{H}}\right)
			\end{aligned}
			$$
		\end{lemma}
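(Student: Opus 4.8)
The plan is to read the left-hand side as a non-degenerate complex Gaussian integral over the $2nm$ real coordinates of $\mathbf{U}\in\mathbb{C}^{n\times m}$ and to evaluate it by completing the square. Because $\mathbf{A}$ and $\mathbf{C}$ are Hermitian positive definite, so is $\mathbf{A}^{\top}\otimes\mathbf{C}$, and hence $\operatorname{tr}(\mathbf{A}\mathbf{U}^{H}\mathbf{C}\mathbf{U})$ is a positive definite Hermitian form in $\mathbf{U}$; this is exactly what guarantees the Gaussian decay of the integrand and the absolute convergence of the integral. First I would vectorize: with $\mathbf{u}=\operatorname{vec}(\mathbf{U})$ the identity $\operatorname{tr}(\mathbf{A}\mathbf{U}^{H}\mathbf{C}\mathbf{U})=\mathbf{u}^{H}(\mathbf{A}^{\top}\otimes\mathbf{C})\mathbf{u}$ holds, while the two linear terms become $\operatorname{vec}(\mathbf{B})^{H}\mathbf{u}$ and $\mathbf{u}^{H}\operatorname{vec}(\mathbf{D})$. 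Writing $\mathbf{M}=\mathbf{A}^{\top}\otimes\mathbf{C}$, $\mathbf{b}=\operatorname{vec}(\mathbf{B})$ and $\mathbf{d}=\operatorname{vec}(\mathbf{D})$, the exponent is $-\pi(\mathbf{u}^{H}\mathbf{M}\mathbf{u}+\mathbf{b}^{H}\mathbf{u}+\mathbf{u}^{H}\mathbf{d})$, so the problem reduces to the canonical complex Gaussian integral with linear source terms.

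Next I would complete the square at the vector level. Using $\mathbf{M}=\mathbf{M}^{H}$ one verifies by direct expansion that
\[
\mathbf{u}^{H}\mathbf{M}\mathbf{u}+\mathbf{b}^{H}\mathbf{u}+\mathbf{u}^{H}\mathbf{d}=(\mathbf{u}+\mathbf{M}^{-1}\mathbf{b})^{H}\mathbf{M}(\mathbf{u}+\mathbf{M}^{-1}\mathbf{d})-\mathbf{b}^{H}\mathbf{M}^{-1}\mathbf{d}.
\]
The constant $\mathbf{b}^{H}\mathbf{M}^{-1}\mathbf{d}$ is translated back through the mixed-product rule $(\mathbf{A}^{\top}\otimes\mathbf{C})^{-1}=(\mathbf{A}^{\top})^{-1}\otimes\mathbf{C}^{-1}$ and the identity $\operatorname{vec}(\mathbf{B})^{H}(\mathbf{P}\otimes\mathbf{Q})\operatorname{vec}(\mathbf{D})=\operatorname{tr}(\mathbf{B}^{H}\mathbf{Q}\mathbf{D}\mathbf{P}^{\top})$, producing the factor $\operatorname{etr}(\pi\mathbf{A}^{-1}\mathbf{B}\mathbf{C}^{-1}\mathbf{D}^{H})$. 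The remaining centered integral I would evaluate by a Cholesky (or eigen-) factorization $\mathbf{M}=\mathbf{L}\mathbf{L}^{H}$ and the complex-linear change of variables $\mathbf{w}=\mathbf{L}^{H}\mathbf{u}$, whose real Jacobian on $\mathbb{R}^{2nm}$ equals $\det(\mathbf{M})$, so that it collapses to the normalized integral $\int_{\mathbb{C}^{nm}}\operatorname{etr}(-\pi\mathbf{w}^{H}\mathbf{w})\,d\mathbf{w}=1$ (this fixes the role of the $\pi$ in the exponent). Combining this with the Kronecker determinant identity $\det(\mathbf{A}^{\top}\otimes\mathbf{C})=\det(\mathbf{A})^{n}\det(\mathbf{C})^{m}$ yields the prefactor $\det(\mathbf{A}^{\top}\otimes\mathbf{C})^{-1}=\det(\mathbf{A})^{-n}\det(\mathbf{C})^{-m}$, reproducing both displayed forms of the right-hand side.

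The step I expect to be the main obstacle is justifying the completing-the-square shift. Since $\mathbf{u}$ and $\mathbf{u}^{H}$ carry the \emph{different} displacements $\mathbf{M}^{-1}\mathbf{b}$ and $\mathbf{M}^{-1}\mathbf{d}$, the substitution is not a genuine real translation of the domain but an independent complex shift of the holomorphic and anti-holomorphic coordinates, so one cannot simply invoke translation invariance of Lebesgue measure. I would resolve this with an analyticity argument: the integrand is entire in the (decoupled) shift parameters, and the positive definiteness $\mathbf{M}\succ0$ forces a uniform Gaussian domination on compact parameter sets, so dominated convergence together with Morera's theorem shows the integral is holomorphic and constant in those parameters, hence equal to its value at zero shift. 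Once this invariance is established, the determinant evaluation, the $\pi$-normalization, and the $\operatorname{vec}$/Kronecker bookkeeping are all routine, completing the identity.
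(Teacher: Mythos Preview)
Your approach is correct and is the standard proof of this matrix-variate Gaussian integral: vectorize via $\operatorname{vec}$ and the Kronecker product, complete the square, and evaluate the centered integral by a unitary/Cholesky change of variables, with the Kronecker determinant identity giving the prefactor. Your identification of the only genuinely delicate point---that when $\mathbf{B}\neq\mathbf{D}$ the ``shift'' is a complex displacement of the holomorphic and antiholomorphic coordinates rather than a real translation, and therefore needs an analyticity/contour-shift justification under the Gaussian domination provided by $\mathbf{M}=\mathbf{A}^{\top}\otimes\mathbf{C}\succ0$---is exactly right.

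There is nothing to compare against in the paper itself: the lemma is quoted verbatim from \cite[Appendix~B]{Taricco08} and is not proved here, only invoked. Your argument is precisely the computation that underlies that reference. One small caution: the dimensions in the displayed statement are slightly inconsistent as written (with $\mathbf{B},\mathbf{D}\in\mathbb{C}^{m\times n}$ and $\mathbf{U}\in\mathbb{C}^{n\times m}$, the products $\mathbf{B}^{H}\mathbf{U}$ and $\mathbf{U}^{H}\mathbf{D}$ are not defined); you should silently fix this by taking $\mathbf{B},\mathbf{D}\in\mathbb{C}^{n\times m}$ (the same shape as $\mathbf{U}$), after which your $\operatorname{vec}$ identities and the trace identity $\operatorname{vec}(\mathbf{B})^{H}(\mathbf{P}\otimes\mathbf{Q})\operatorname{vec}(\mathbf{D})=\operatorname{tr}(\mathbf{B}^{H}\mathbf{Q}\mathbf{D}\mathbf{P}^{\top})$ go through and reproduce the claimed $\operatorname{etr}(\pi\mathbf{A}^{-1}\mathbf{B}\mathbf{C}^{-1}\mathbf{D}^{H})$ up to the obvious relabeling.
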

	
		This section shows how to derive Eq. \eqref{MMo} from Eq. \eqref{Low} under a semantic-aware MIMO communication system. Specifically, we derive the explicit forms of the source excess distortion exponent $\widetilde{E}(r,\tb{P}_\tb{X},\tb{P}_{\tb{S}|\tb{X}})$, and the expurgated random coding bound $E_{\mathrm{ex}}(R,\tb{W}_{\tb{Z}|\tb{Y}})$ on channel excess distortion exponent under MIMO communication systems. The basic idea of the proof is, for the excess distortion exponent of source pairs, we rewrite the K-L divergence according to the generalized rate-distortion function into a computable optimization problem. For the MIMO channel expurgated random coding exponent, we utilize the equivalence between Csiszar's form and Gallager's form by Fenchel duality. 
		%

		We start from a jointly Gaussian distributed vector pair $(\tb{S}=\tb{h}\tb{X}+\tb{N},\tb{X})$, and the reconstructed source vectors $\hat{\tb{S}}$, $\hat{\tb{X}}$. The minimum of Eq. \eqref{sourcecoding} is presented with two distributions $\tb{Q}_\tb{X}$ and $\tb{U}_{\tb{S}|\tb{X}}$ subject to  $R(\tb{Q}_\tb{X},\tb{U}_{\tb{S}|\tb{X}},D_s,D_x)\geq r$. Note that under the Gaussian vector assumption, the semantic-aware rate-distortion function can be rewritten as
		\begin{align}
			&R(\tb{Q}_\tb{X},\tb{U}_{\tb{S}|\tb{X}},D_s,D_x)=\min\frac{1}{2}\log\left(\frac{\det(\tb{A})}{\det(\bm{\Delta})}\right)\label{a1}\\
			\text{s.t.}\qquad& \bm{O}\prec \bm{\Delta}\preceq\tb{A},\\
			&\mathrm{tr}\left\{\tb{h}\bm{\Delta}\tb{h}^H\right\}\leq D_s-\mathrm{tr}\{\tb{B}\},\\
			&\mathrm{tr}\left\{\bm{\Delta}\right\}\leq D_x,
		\end{align}
	Moreover, in Eq. \eqref{sourcecoding}, the first divergence can be computed as
	\begin{align}
		D(\tb{Q}_\tb{X}||\tb{P}_\tb{X})&=\mathbb{E}_{\tb{Q}_\tb{X}}\left[\log\tb{Q}_\tb{X}-\log\tb{P}_\tb{X}\right]\notag\\
		&=\mathbb{E}_{\tb{Q}_\tb{X}}\left[\frac{1}{2}\log\frac{\det(\bm{\Sigma}_X)}{\det(\tb{A})}+\frac{1}{2}\tb{x}^H\left(\bm{\Sigma}_X^{-1}-\tb{A}^{-1}\right)\tb{x}\right]\notag\\
		&=\frac{1}{2}\log\frac{\det(\bm{\Sigma}_X)}{\det(\tb{A})}-q+\mathrm{tr}\left\{\bm{\Sigma}_X^{-1}\tb{A}\right\}\label{41}
	\end{align}
where the auxiliary multivariate Gaussian distribution $\tb{Q}_\tb{X}(\tb{x})=\frac{1}{(2\pi)^{\frac{q}{2}}\det^\frac{1}{2}(\tb{A})}\exp\{\frac{1}{2}\tb{x}^H\tb{A}^{-1}\tb{x}\}$. Similarly, the conditional divergence in Eq. \eqref{sourcecoding} can be stated as
\begin{align}
	D(\tb{U}_{\tb{S}|\tb{X}}||\tb{P}_{\tb{S}|\tb{X}}|\tb{Q}_\tb{X})&=\mathbb{E}_{\tb{Q}_\tb{X}\times\tb{U}_{\tb{S}|\tb{X}}}[\log\tb{U}_{\tb{S}|\tb{X}}-\log\tb{P}_{\tb{S}|\tb{X}}]\notag\\
	&=\frac{1}{2}\log\frac{\det(\bm{\Sigma}_N)}{\det(\tb{B})}+\mathrm{tr}\left\{\left(\bm{\Sigma}_N-\tb{B}^{-1}\right)\tb{h}^H\bm{\Sigma}_X\tb{h}+\bm{\Sigma}_N^{-1}\tb{B}\right\}-\ell\label{42}
\end{align}
where the distribution $\tb{U}_{\tb{S}|\tb{X}}(\tb{s}|\tb{x})=\frac{1}{(2\pi)^{\frac{l}{2}}\det^\frac{1}{2}(\tb{B})}\exp\{\frac{1}{2}\left(\tb{s}-\tb{h}\tb{x}\right)^H\tb{B}^{-1}\left(\tb{s}-\tb{h}\tb{x}\right)\}$. Finally, combining Eq.\ eqref{41} and Eq. \eqref{42} with the constraints from Eq. (37)-(39), we obtain the semantic-aware source excess distortion exponent as Eq. \eqref{Gsource}.

		In Theorem \ref{Maintheorem}, we state the channel exponent in Csiszar's form (Eq. \eqref{channelex}) in light of the simplicity on statement, but it is hard to be computed. Notably, Zhong \cite{Zhong08} combined the Csiszar's exponent with Gallager's reliability function via Fenchel duality and proved the equivalence, in which the later is easy to be extended and analyzed. The reader can turn to \cite[Chapter~4]{Zhong08} \cite[Chapter~7]{Gallager_1968} for more details.
		
		Under the MIMO system, Gallager's random coding bound is given by \cite[Prop~1]{shin2009gallager}. For an expurgated bound, which is derived from a codebook expurgating the bad codewords, and hence performs better than random coding bound at lower code rate, Alfano \cite[Thm~3.2]{Alfano_Chiasserini_Nordio_Zhou_2015} evaluated it under simple assumptions. Herein we present the expurgated bound on error exponent in terms of random matrices. From \cite{Gallager_1968}, the expurgated exponent is stated as
		\begin{small}
			\begin{align}
				E_{\mathrm{ex}}=-\frac{1}{N_c}\ln\int_{\tb{H}} p_{\tb{H}}(\tb{H})\left\{\int_{\tb{Y}\tilde{\tb{Y}}}p_{\tb{Y}\tilde{\tb{Y}}}(\tb{Y}\tilde{\tb{Y}})\exp\left\{\delta\left[\mathrm{tr}\left(\tb{Y}\tb{Y}^H+\tilde{\tb{Y}}\tilde{\tb{Y}}^H\right)-2\mathcal{P}\right]\right\}w(\tb{Y},\tilde{\tb{Y}},\tb{Z})^{\frac{1}{\rho}}d\tb{Y}\tilde{\tb{Y}}\right\}^\rho d\tb{H}\label{ex},
			\end{align}
		\end{small}
		where $w(\tb{Y},\tilde{\tb{Y}},\tb{Z})=\int_{\tb{Z}}\sqrt{p(\tb{Z}|\tb{Y},\tb{H})p(\tb{Z}|\tilde{\tb{Y}},\tb{H})}d\tb{Z}
		$ and $-\ln w(\tb{Y},\tilde{\tb{Y}},\tb{Z})$ is the Bhattacharya distance between channel input matrices $\tb{Y}$ and $\tilde{\tb{Y}}$ while $\tb{Z}$ is the channel output. Next we first process the aforementioned integral with the transition probability \eqref{MIMOtran} of MIMO system as
		\begin{align}
			w(\tb{Y},\tilde{\tb{Y}},\tb{Z})&=(\pi N_w)^{-n_RN_c}\exp\left\{-\frac{1}{2N_w}\mathrm{tr}\left(
			\tb{H}\tb{Y}\tb{Y}^H\tb{H}^H+\tb{H}\tilde{\tb{Y}}\tilde{\tb{Y}}^H\tb{H}^H\right)\right\}\notag\\
			&\times\int_{\tb{Z}}\exp\left\{-\frac{1}{2N_w}\mathrm{tr}\left(2\tb{Z}\tb{Z}^H-\tb{Z}\tb{Y}^H\tb{H}^H-\tb{H}\tb{Y}\tb{Z}^H-\tb{Z}\tilde{\tb{Y}}^H\tb{H}^H-\tb{H}\tilde{\tb{Y}}\tb{Z}^H\right)\right\}d\tb{Z}\notag\\
			&=\exp\left\{-\frac{1}{4N_w}\mathrm{tr}\left(\tb{H}\left(\tb{Y}-\tilde{\tb{Y}}\right)\left(\tb{Y}-\tilde{\tb{Y}}\right)^H\tb{H}^H\right)\right\}\label{l4}
		\end{align}
		where \eqref{l4} follows from Lemma \ref{Matint}. Moreover, by assuming a capacity achieving input distribution on matrix $\hat{\tb{Y}}$, we obtain
		\begin{align}
			&\int_{\tilde{\tb{Y}}}p_{\tilde{\tb{Y}}}(\tilde{\tb{Y}})\exp\left\{\delta\mathrm{tr}\left(\tilde{\tb{Y}}\tilde{\tb{Y}}^H\right)\right\}w(\tb{Y},\tilde{\tb{Y}},\tb{Z})^\frac{1}{\rho}d\tilde{\tb{Y}}\notag\\
			=&\pi^{-n_TN_c}\det(\tb{Q})^{-N_c}
			\int_{\tilde{\tb{Y}}}\exp\left\{\mathrm{tr}\left(\left(\delta\tb{I}_{n_T}-\tb{Q}^{-1}-\frac{1}{4N_w\rho}\tb{H}^H\tb{H}\right)\tilde{\tb{Y}}\tilde{\tb{Y}}^H\right.\right.\notag\\
			&\left.\left.-\frac{1}{4N_w\rho}\tilde{\tb{Y}}^H\tb{H}^H\tb{H}\tb{Y}-\frac{1}{4N_w\rho}\tb{Y}^H\tb{H}^H\tb{H}\tilde{\tb{Y}}\right)\right\}d\tilde{\tb{Y}}\notag\\
			=&\det(\tb{Q})^{-N_c}\det(\tb{A})^{-N_c}\exp\left\{\mathrm{tr}\left(\frac{1}{16N_w^2\rho^2}\tb{A}^{-1}\tb{Y}^H\tb{H}^H\tb{H}\tb{H}^H\tb{H}\tb{Y}\right)\right\}\label{52}.
		\end{align}
		By applying Lemma \ref{Matint} again and $\tb{A}=\delta\tb{I}_{n_T}-\tb{Q}^{-1}-\frac{1}{4N_w\rho}\tb{H}^H\tb{H}$, we achieve equation \eqref{52}. The expectation on input matrix $\tb{Y}$ can be formulated as
		\begin{align}
			&\det(\tb{Q}\tb{A})^{-N_c}\int_{\tb{Y}}p_{\tb{Y}}(\tb{Y})\exp\left\{\delta\mathrm{tr}\left(\tb{Y}\tb{Y}^H-2\mathcal{P}\right)\right\}\exp\left\{\mathrm{tr}\left(\frac{1}{16N_w^2\rho^2}\tb{A}^{-1}\tb{Y}^H\tb{H}^H\tb{H}\tb{H}^H\tb{H}\tb{Y}\right)\right\}d\tb{Y}\notag\\
			&=\det(\tb{Q}\tb{A})^{-N_c}\exp\{-2\delta N_C\mathcal{P}\}\int_{\tb{Y}}p_{\tb{Y}}(\tb{Y})\exp\left\{\mathrm{tr}\left(\frac{\tb{H}^H\tb{H}\tb{A}^{-1}\tb{H}^H\tb{H}}{16N_w^2\rho^2}-\frac{\tb{H}^H\tb{H}}{4N_w\rho}+\delta\right)\tb{Y}\tb{Y}^H\right\}d\tb{Y}\notag\\
			&=\exp\{-2rN_C\mathcal{P}\}\det(\tb{Q}\tb{A})^{-N_c}\det\left(\tb{I}_{n_T}-\tb{Q}\left(\frac{\tb{H}^H\tb{H}\tb{A}^{-1}\tb{H}^H\tb{H}}{16N_w^2\rho^2}-\frac{\tb{H}^H\tb{H}}{4N_w\rho}+\delta\right)\right)^{-N_c}\label{53}
		\end{align}
		Finally substituting \eqref{53} into \eqref{ex} yields the expurgated bound on the error exponent in \eqref{exb}.
		
		In conclusion, we state the source exponent as piecewise function Eq. (16). Finally, combining Eq. (16) with the statement of expurgated random coding bound of channel exponent, we complete the proof of Theorem \ref{Gaussian}.

	\ifCLASSOPTIONcaptionsoff
	\newpage
	\fi


	
	%
	%
	\bibliographystyle{IEEEtran}
	\bibliography{Ref}

\begin{thebibliography}{10}
\providecommand{\url}[1]{#1}
\csname url@samestyle\endcsname
\providecommand{\newblock}{\relax}
\providecommand{\bibinfo}[2]{#2}
\providecommand{\BIBentrySTDinterwordspacing}{\spaceskip=0pt\relax}
\providecommand{\BIBentryALTinterwordstretchfactor}{4}
\providecommand{\BIBentryALTinterwordspacing}{\spaceskip=\fontdimen2\font plus
\BIBentryALTinterwordstretchfactor\fontdimen3\font minus
  \fontdimen4\font\relax}
\providecommand{\BIBforeignlanguage}[2]{{%
\expandafter\ifx\csname l@#1\endcsname\relax
\typeout{** WARNING: IEEEtran.bst: No hyphenation pattern has been}%
\typeout{** loaded for the language `#1'. Using the pattern for}%
\typeout{** the default language instead.}%
\else
\language=\csname l@#1\endcsname
\fi
#2}}
\providecommand{\BIBdecl}{\relax}
\BIBdecl

\bibitem{Burks_Shannon_Weaver_1951}
C.~E. Shannon, ``A mathematical theory of communication,'' \emph{Bell Syst.
  Tech. J.}, vol.~27, no.~3, pp. 379--423, Jul. 1948.

\bibitem{Carnap1953An}
R.~Carnap and Y.~Bar-Hillel, ``An outline of a theory of semantic
  information,'' \emph{Brit. J. Philosophy Sci.}, vol.~4, no.~14, pp. 147--157,
  Oct. 1953.

\bibitem{2011Towards}
J.~Bao, P.~Basu, M.~Dean, C.~Partridge, A.~Swami, W.~Leland, and J.~A. Hendler,
  ``Towards a theory of semantic communication,'' in \emph{Proc. IEEE Netw.
  Sci. Workshop 2021}, Jun. 2011, pp. 110--117.

\bibitem{2011Universal}
B.~Juba and M.~Sudan, \emph{Universal Semantic Communication}, Berlin,
  Germany:Springer, 2011.

\bibitem{Liu_Zhang_Poor_2021}
J.~Liu, W.~Zhang, and H.~V. Poor, ``A rate-distortion framework for
  characterizing semantic information,'' in \emph{Proc. IEEE Int. Symp. Inf.
  Theory {(ISIT)} 2021, Melbourne, Australia}, Jul. 2021, pp. 2894--2899.

\bibitem{Fangfang2022}
\BIBentryALTinterwordspacing
F.~Liu, W.~Tong, Z.~Sun, and C.~Guo, ``Task-oriented semantic communication
  systems based on extended rate-distortion theory,'' Feb. 2022, arxiv:
  2201.10929. [Online]. Available: \url{https://arxiv.org/abs/2201.10929}
\BIBentrySTDinterwordspacing

\bibitem{JinhoChoi2022}
J.~Choi, S.~W. Loke, and J.~Park, ``A unified view on semantic information and
  communication: {A} probabilistic logic approach,'' in \emph{Proc {IEEE} Int.
  Conf. Commun. {(ICC)} Workshops, Seoul, Korea}, May 2022, pp. 705--710.

\bibitem{Nariman2018}
N.~Farsad, M.~Rao, and A.~Goldsmith, ``Deep learning for joint source-channel
  coding of text,'' in \emph{Proc {IEEE} Int. Conf. Acoust. Speech Signal
  Process., {(ICASSP)}, Calgary, AB, Canada}, Apr. 2018, pp. 2326--2330.

\bibitem{Xie_Qin_Li_Juang_2020}
H.~Xie, Z.~Qin, G.~Y. Li, and B.~Juang, ``Deep learning enabled semantic
  communication systems,'' \emph{{IEEE} Trans. Signal Process.}, vol.~69, pp.
  2663--2675, Apr. 2021.

\bibitem{Zhenzi2021}
Z.~Weng and Z.~Qin, ``Semantic communication systems for speech transmission,''
  \emph{{IEEE} J. Sel. Areas Commun.}, vol.~39, no.~8, pp. 2434--2444, Jun.
  2021.

\bibitem{Kountouris_Pappas_2021}
M.~Kountouris and N.~Pappas, ``Semantics-empowered communication for networked
  intelligent systems,'' \emph{{IEEE} Commun. Mag.}, vol.~59, no.~6, pp.
  96--102, Jun. 2021.

\bibitem{Huang2021}
D.~Huang, X.~Tao, F.~Gao, and J.~Lu, ``Deep learning-based image semantic
  coding for semantic communications,'' in \emph{IEEE Glob. Commun. Conf.
  (GLOBECOM), Madrid, Spain}, Dec. 2021, pp. 1--6.

\bibitem{Dommel_Utkovski_Simeone_Stanczak_2021}
J.~Dommel, Z.~Utkovski, O.~Simeone, and S.~Stanczak, ``Joint source-channel
  coding for semantics-aware grant-free radio access in {IoT} fog networks,''
  \emph{{IEEE} Signal Process. Lett.}, vol.~28, pp. 728--732, Apr. 2021.

\bibitem{Jiang_Wen_Jin_Li_2021}
P.~Jiang, C.~Wen, S.~Jin, and G.~Y. Li, ``Deep source-channel coding for
  sentence semantic transmission with {HARQ},'' \emph{{IEEE} Trans. Commun.},
  vol.~70, no.~8, pp. 5225--5240, Jun. 2022.

\bibitem{winters1987capacity}
H.~Shin, M.~Win, J.~Lee, and M.~Chiani, ``On the capacity of doubly correlated
  mimo channels,'' \emph{{IEEE} Trans. Wireless Commun.}, vol.~5, no.~8, pp.
  2253--2265, 2006.

\bibitem{telatar1999capacity}
E.~Telatar, ``Capacity of multi-antenna gaussian channels,'' \emph{Eur. Trans.
  Telecommun.}, vol.~10, no.~6, pp. 585--595, May 1999.

\bibitem{dighe2003analysis}
P.~A. Dighe, R.~K. Mallik, and S.~S. Jamuar, ``Analysis of transmit-receive
  diversity in rayleigh fading,'' \emph{IEEE Trans. Commun.}, vol.~51, no.~4,
  pp. 694--703, Aug. 2003.

\bibitem{alamouti1998simple}
S.~M. Alamouti, ``A simple transmit diversity technique for wireless
  communications,'' \emph{{IEEE} J. Sel. Areas Commun.}, vol.~16, no.~8, pp.
  1451--1458, Oct. 1998.

\bibitem{Shin_Win_2008}
H.~Shin and M.~Z. Win, ``{MIMO} diversity in the presence of double
  scattering,'' \emph{IEEE Trans. Inf. Theory}, vol.~54, no.~7, p. 2976–2996,
  Jul. 2008.

\bibitem{foschini1996layered}
G.~J. Foschini, ``Layered space-time architecture for wireless communication in
  a fading environment when using multi-element antennas,'' \emph{Bell Syst.
  Tech. J.}, vol.~1, no.~2, pp. 41--59, 1996.

\bibitem{tarokh1998space}
V.~Tarokh, N.~Seshadri, and A.~R. Calderbank, ``Space-time codes for high data
  rate wireless communication: Performance criterion and code construction,''
  \emph{IEEE Trans. Inf. Theory}, vol.~44, no.~2, pp. 744--765, Mar. 1998.

\bibitem{Gallager_1968}
R.~G. Gallager, \emph{\BIBforeignlanguage{en}{Information Theory and Reliable
  Communication}}.\hskip 1em plus 0.5em minus 0.4em\relax Hoboken, NJ: Wiley:
  Cambridge University Press, 1971.

\bibitem{Marton_1974}
K.~Marton, ``Error exponent for source coding with a fidelity criterion,''
  \emph{{IEEE} Trans. Inf. Theory}, vol.~20, no.~2, pp. 197--199, Mar. 1974.

\bibitem{Csiszar_1981}
I.~Csiszár and J.~Körner, \emph{Information Theory: Coding Theorems for
  Discrete Memoryless Systems}.\hskip 1em plus 0.5em minus 0.4em\relax New
  York: Academi: Cambridge University Press, 1981.

\bibitem{Csiszar_1982}
I.~Csisz{\'{a}}r, ``On the error exponent of source-channel transmission with a
  distortion threshold,'' \emph{{IEEE} Trans. Inf. Theory}, vol.~28, no.~6, pp.
  823--827, Mar. 1982.

\bibitem{Zhong_Alajaji_Campbell_2007}
Y.~Zhong, F.~Alajaji, and L.~L. Campbell, ``Joint source-channel coding error
  exponent for discrete communication systems with markovian memory,''
  \emph{{IEEE} Trans. Inf. Theory}, vol.~53, no.~12, pp. 4457--4472, Dec. 2007.

\bibitem{Zhong_Alajaji_Campbell_2009}
------, ``Joint source-channel coding excess distortion exponent for some
  memoryless continuous-alphabet systems,'' \emph{{IEEE} Trans. Inf. Theory},
  vol.~55, no.~3, pp. 1296--1319, Feb. 2009.

\bibitem{Yangfan_Zhong_Alajaji_Campbell_2006}
------, ``On the joint source-channel coding error exponent for discrete
  memoryless systems,'' \emph{{IEEE} Trans. Inf. Theory}, vol.~52, no.~4, pp.
  1450--1468, Apr. 2006.

\bibitem{Chang_2009}
C.~Chang, ``Error exponents for joint source-channel coding with side
  information,'' \emph{{IEEE} Trans. Inf. Theory}, vol.~57, no.~10, pp.
  6877--6889, Oct 2011.

\bibitem{Alfano_Chiasserini_Nordio_Zhou_2015}
G.~Alfano, C.-F. Chiasserini, A.~Nordio, and S.~Zhou, ``A unifying analysis of
  error exponents for {MIMO} channels with application to
  multiple-scattering,'' in \emph{2015 International Symposium on Wireless
  Communication Systems (ISWCS)}.\hskip 1em plus 0.5em minus 0.4em\relax
  Brussels, Belgium: IEEE, Aug 2015, p. 321–325.

\bibitem{Yang2022}
\BIBentryALTinterwordspacing
K.~Yang, S.~Wang, J.~Dai, K.~Tan, K.~Niu, and P.~Zhang, ``{WITT}: A wireless
  image transmission transformer for semantic communications,'' Nov. 2022,
  arxiv:2211.00937. [Online]. Available: \url{https://arxiv.org/abs/2211.00937}
\BIBentrySTDinterwordspacing

\bibitem{Tian2022}
\BIBentryALTinterwordspacing
Y.~Tian, G.~Lu, Y.~Yan, G.~Zhai, L.~Chen, and Z.~Gao, ``A coding framework and
  benchmark towards compressed video understanding,'' Feb. 2022,
  arXiv:2202.02813. [Online]. Available: \url{https://arxiv.org/abs/2202.02813}
\BIBentrySTDinterwordspacing

\bibitem{Weissman_Merhav_2002}
T.~Weissman and N.~Merhav, ``Tradeoffs between the excess-code-length exponent
  and the excess-distortion exponent in lossy source coding,'' \emph{{IEEE}
  Trans. Inf. Theory}, vol.~48, no.~2, pp. 396--415, Feb. 2002.

\bibitem{Weissman_2004}
T.~Weissman, ``Universally attainable error exponents for rate-distortion
  coding of noisy sources,'' \emph{{IEEE} Trans. Inf. Theory}, vol.~50, no.~6,
  pp. 1229--1246, Sep. 2004.

\bibitem{Qiufang2022}
Q.~Lian, L.~Zhou, Q.~Chen, C.~Chen, and X.~Wang, ``Spatially coupled {LDPC}
  codes based joint source-channel coding,'' in \emph{Advances in Wireless
  Communications and Applications}.\hskip 1em plus 0.5em minus 0.4em\relax
  Singapore: Springer Nature Singapore, 2023, pp. 67--75.

\bibitem{Kai2022}
\BIBentryALTinterwordspacing
Y.~Dong, K.~Niu, J.~Dai, S.~Wang, and Y.~Yuan, ``Joint successive cancellation
  list decoding for the double polar codes,'' Jan. 2022, arXiv:2201.02924.
  [Online]. Available: \url{https://arxiv.org/abs/2201.02924}
\BIBentrySTDinterwordspacing

\bibitem{Zhong08}
Y.~Zhong, ``Joint source-channel coding reliability function for single and
  multi-terminal communication systems,'' Ph.D. dissertation, Dept. Math.
  Statist., Queen’s Univ., Kingston, ON, Canada, 2008.

\bibitem{James_1964}
A.~T. James, ``\BIBforeignlanguage{en}{Distributions of matrix variates and
  latent roots derived from normal samples},''
  \emph{\BIBforeignlanguage{en}{The Annals of Mathematical Statistics}},
  vol.~35, no.~2, p. 475–501, Jun. 1964.

\bibitem{shin2009gallager}
H.~Shin and M.~Z. Win, ``Gallager's exponent for {MIMO} channels: A
  reliability-rate tradeoff,'' \emph{IEEE Trans. Commun.}, vol.~57, no.~4, pp.
  972--985, Apr. 2009.

\bibitem{1998Joint}
E.~Arikan and N.~Merhav, ``Joint source-channel coding and guessing with
  application to sequential decoding,'' \emph{{IEEE} Trans. Inf. Theory},
  vol.~44, no.~5, pp. 1756--1769, Sep. 1998.

\bibitem{Shannon_1967}
C.~E. Shannon, R.~G. Gallager, and E.~R. Berlekamp, ``Lower bounds to error
  probability for coding on discrete memoryless channels. {I},'' \emph{Inf.
  Control}, vol.~10, no.~1, pp. 65--103, Jan. 1967.

\bibitem{Taricco08}
G.~Taricco, ``Asymptotic mutual information statistics of separately correlated
  {R}ician fading {MIMO} channels,'' \emph{{IEEE} Trans. Inf. Theory}, vol.~54,
  no.~8, pp. 3490--3504, Aug. 2008.

\end{thebibliography}
	
	%

	
	
	
\end{spacing}

\end{document}